\documentclass[11pt]{article}

\addtolength{\textwidth}{3cm}
\addtolength{\textheight}{4cm}
\addtolength{\oddsidemargin}{-1.5cm}
\addtolength{\evensidemargin}{-1.5cm}
\addtolength{\topmargin}{-2cm}

\usepackage[utf8]{inputenc}
\usepackage[english]{babel}
\usepackage{graphicx}         
\usepackage[pdftex]{hyperref}
\usepackage{braket}
\usepackage{amsmath}
\usepackage{amssymb}
\usepackage{amsthm}
\usepackage{mathtools}
\usepackage{bbm}		
\usepackage{array}		
\usepackage{color}
\usepackage[usenames,dvipsnames]{xcolor}
\usepackage{caption}
\usepackage{enumitem}
\usepackage{subeqnarray}	
\usepackage{wrapfig}
\usepackage{cancel}
\usepackage{enumitem}

\allowdisplaybreaks

\numberwithin{equation}{section}

\theoremstyle{plain}\newtheorem{definition}{Definition}[section]
\newtheorem{lem}[definition]{Lemma}
\newtheorem{proposition}[definition]{Proposition}

\theoremstyle{remark}\newtheorem{remark}[definition]{Remark}

\theoremstyle{plain}

\theoremstyle{plain}\newtheorem{assumption}{Assumption}

\theoremstyle{plain}\newtheorem{theorem}{Theorem}

\newcommand{\thmit}[1]{\begin{enumerate}[label={(\alph*)}, ref={\thetheorem\alph*}]{#1}\end{enumerate}}
\newcommand{\lemit}[1]{\begin{enumerate}[label={(\alph*)}, ref={\thelem\alph*}]{#1}\end{enumerate}}

\newcommand{\defit}[1]{\begin{enumerate}[label={(\alph*)}, ref={\thedefinition\alph*}]{#1}\end{enumerate}}

\newcommand{\D}{\mathcal{D}}
\newcommand{\R}{\mathbb{R}}
\newcommand{\C}{\mathbb{C}}
\newcommand{\N}{\mathbb{N}}
\newcommand{\fH}{\mathfrak{H}}
\newcommand{\Fock}{\mathcal{F}}
\newcommand{\Number}{\mathcal{N}}
\newcommand{\vac}{|\Omega\rangle}
\newcommand{\id}{\mathbbm{1}}
\newcommand{\cJ}{\mathcal{J}}
\newcommand{\cS}{\mathcal{S}}
\newcommand{\cL}{\mathcal{L}}
\newcommand{\cO}{\mathcal{O}}

\newcommand{\cQ}{\mathcal{Q}}

\newcommand{\FockB}{\mathbb{B}}

\newcommand{\FockT}{\mathbb{T}}
\newcommand{\FockI}{\mathbb{I}}
\newcommand{\FockJ}{\mathbb{J}}
\newcommand{\FockF}{\mathbb{F}}

\newcommand{\tB}{\widetilde{B}}

\newcommand{\bE}{\mathbb{E}}
\newcommand{\HS}{\mathrm{HS}}

\let\textl\l
\renewcommand{\l}{\ell}
\renewcommand{\i}{\mathrm{i}}
\newcommand{\e}{\mathrm{e}}
\newcommand{\hc}{\mathrm{h.c.}}

\newcommand{\sym}{\mathrm{sym}}
\newcommand{\Tr}{\mathrm{Tr}}

\newcommand{\bPhi}{{\boldsymbol{\phi}}}

\newcommand{\bj}{\boldsymbol{j}}

\newcommand{\btau}{\boldsymbol{\tau}}
\newcommand{\bl}{{\boldsymbol{\l}}}
\newcommand{\bxi}{\boldsymbol{\xi}}

\renewcommand{\hat}[1]{\widehat{#1}}
\renewcommand{\tilde}[1]{\widetilde{#1}}

\newcommand{\lr}[1]{\left\langle #1 \right\rangle}
\newcommand{\norm}[1]{\lVert#1\rVert}
\newcommand{\onorm}[1]{\lVert#1\rVert_\mathrm{op}}

\renewcommand{\d}{\mathop{}\!\mathrm{d}}
\newcommand{\dx}{\d x}
\newcommand{\dy}{\d y}

\newcommand{\ds}{\d s}

\newcommand{\Ubar}{\overline{U}}
\newcommand{\Vbar}{\overline{V}}

\newcommand{\ad}{a^\dagger}
\newcommand{\Ad}{A^\dagger}

\newcommand\mydots{,\makebox[1em][c]{.\hfil.\hfil.},}
\newcommand\mycdots{\makebox[1em][c]{$\cdot$\hfil$\cdot$\hfil$\cdot$}}

\makeatletter
\DeclareFontFamily{OMX}{MnSymbolE}{}
\DeclareSymbolFont{MnLargeSymbols}{OMX}{MnSymbolE}{m}{n}
\SetSymbolFont{MnLargeSymbols}{bold}{OMX}{MnSymbolE}{b}{n}
\DeclareFontShape{OMX}{MnSymbolE}{m}{n}{
    <-6>  MnSymbolE5
   <6-7>  MnSymbolE6
   <7-8>  MnSymbolE7
   <8-9>  MnSymbolE8
   <9-10> MnSymbolE9
  <10-12> MnSymbolE10
  <12->   MnSymbolE12
}{}
\DeclareFontShape{OMX}{MnSymbolE}{b}{n}{
    <-6>  MnSymbolE-Bold5
   <6-7>  MnSymbolE-Bold6
   <7-8>  MnSymbolE-Bold7
   <8-9>  MnSymbolE-Bold8
   <9-10> MnSymbolE-Bold9
  <10-12> MnSymbolE-Bold10
  <12->   MnSymbolE-Bold12
}{}
\let\llangle\@undefined
\let\rrangle\@undefined
\DeclareMathDelimiter{\llangle}{\mathopen}
                     {MnLargeSymbols}{'164}{MnLargeSymbols}{'164}
\DeclareMathDelimiter{\rrangle}{\mathclose}
                     {MnLargeSymbols}{'171}{MnLargeSymbols}{'171}
\makeatother

\newcommand\smallO[1]{
        \mathchoice
            {
                \ensuremath{\mathop{}\mathopen{}{\scriptstyle\mathcal{O}}\mathopen{}\left(#1\right)}
            }
            {
                \ensuremath{\mathop{}\mathopen{}{\scriptstyle\mathcal{O}}\mathopen{}\left(#1\right)}
            }
            {
                \ensuremath{\mathop{}\mathopen{}{\scriptscriptstyle\mathcal{O}}\mathopen{}\left(#1\right)}
            }
            { 
                \ensuremath{\mathop{}\mathopen{}{o}\mathopen{}\left(#1\right)}
            }
    }


\newcommand{\fHN}{{\fH^N}}

\newcommand{\HN}{H_N}

\newcommand{\PsiN}{{\Psi_N}}

\newcommand{\Uz}{U_0}

\newcommand{\Vz}{V_0}
\newcommand{\Vzbar}{\overline{\Vz}}

\newcommand{\hH}{h}
\newcommand{\eH}{e_\mathrm{H}}
\newcommand{\mH}{\mu_\mathrm{H}}
\newcommand{\cEH}{\mathcal{E}_\mathrm{H}}

\newcommand{\UNp}{U_{N,\varphi}}

\newcommand{\FNp}{{\Fock_\perp^{\leq N}}}

\newcommand{\Fp}{{\Fock_\perp}}

\newcommand{\FockH}{\mathbb{H}}
\newcommand{\FockHz}{\FockH_0}

\newcommand{\FockR}{\mathbb{R}}

\newcommand{\Chi}{{\boldsymbol{\chi}}}

\newcommand{\Chiz}{\Chi_0}

\renewcommand{\P}{\mathbb{P}}

\newcommand{\Np}{\Number_\perp}

\newcommand{\boldKz}{\mathbb{K}_0}
\newcommand{\boldKo}{\mathbb{K}_1}
\newcommand{\boldKt}{\mathbb{K}_2}

\newcommand{\BogU}{\mathbb{U}_\BogV}
\newcommand{\BogV}{\mathcal{V}}

\newcommand{\BogUz}{\mathbb{U}_\BogVz}
\newcommand{\BogVz}{{\mathcal{V}_0}}

\newcommand{\PsiNgs}{\Psi_N^\mathrm{gs}}
\newcommand{\PsiNex}{\Psi_N^\mathrm{ex}}

\newcommand{\ENmex}{\mathcal{C}^{(\eta)}_N}
\newcommand{\ENzex}{\mathcal{C}^{(0)}_N}
\newcommand{\ENoex}{\mathcal{C}^{(1)}_N}

\newcommand{\Egs}{\mathcal{E}_N^\mathrm{gs}}
\newcommand{\Eex}{\mathcal{E}_N^\mathrm{ex}}
\newcommand{\ex}{\mathrm{ex}}
\newcommand{\gs}{\mathrm{gs}}


\newcommand{\Var}{\mathrm{Var}}

\newcommand{\cBN}{\mathcal{B}_N}

\newcommand{\ideal}{\mathrm{iid}}


\title{Weak Edgeworth  expansion for the mean-field Bose gas}
\author{Lea Boßmann\thanks{Institute of Science and Technology Austria, Am Campus 1, 3400 Klosterneuburg, Austria and Ludwig-Maximilians-Universität München, Mathematisches Institut, Theresienstr.\ 39, 80333 München, Germany. \texttt{bossmann@math.lmu.de}}
\
and
Sören Petrat\thanks{School of Science, Constructor University, Campus Ring 1, 28759 Bremen, Germany. \texttt{spetrat@constructor.university}}}

\date{\today}
\begin{document}
\maketitle

\begin{abstract}
We consider the ground state and the low-energy excited states of a system of $N$ identical bosons with interactions in the mean-field scaling regime.
For the ground state, we derive a weak Edgeworth expansion for the fluctuations of bounded one-body operators, which yields corrections to a central limit theorem to any order in $1/\sqrt{N}$. For suitable excited states, we show that the limiting distribution is a polynomial times a normal distribution, and that higher order corrections are given by an Edgeworth-type expansion.
\end{abstract}

\section{Introduction}

A quantum mechanical system of $N$ identical bosons is described by a wave function $\Psi$ that is square integrable and symmetric under the exchange of any two particles, i.e.,
\begin{equation}\label{eqn:permutation:symmetry}
\Psi(x_1\mydots x_i\mydots x_j\mydots x_N)=\Psi(x_1\mydots x_j\mydots x_i\mydots x_N)\,,\qquad i,j\in\{1\mydots N\}\,.
\end{equation}
Hence, $\Psi$ is an element of the symmetric subspace $\fH^N_\sym$ of the $N$-body Hilbert space $\fH^N$, where
\begin{equation}
\fH^N:=\fH^{\otimes N}\,,\qquad \fH^N_\sym:=\fH^{\otimes_\sym N}\,,\qquad \fH:=L^2(\R^d)\,,
\end{equation}
for $d\geq 1$ the spatial dimension of the system and where $\otimes_\sym$ denotes the symmetric tensor product.
We study the statistics of measurements described by self-adjoint operators on $\fH^N$.
In particular, we consider one-body operators on $\fH^N$, i.e., operators of the form
\begin{equation}
B_j=\underbrace{\id\otimes\mycdots\otimes\id}_{j-1}\otimes B\otimes \underbrace{\id\otimes\mycdots\otimes\id}_{N-j}
\end{equation}
for bounded self-adjoint operators $B$ on $\fH$.
Since we consider indistinguishable bosons, we study symmetrized operators, i.e., operators of the form $\sum_{j=1}^N B_j$. An example is the number of particles in a bounded volume $V\subset\R^d$, described by the operator
\begin{equation}
\sum_{j=1}^N\chi_V(x_j)\,,
\end{equation}
where $\chi_V$ denotes the characteristic function on $V$. The goal of this article is to better understand the statistics of such operators. 
\medskip

Due to the permutation symmetry \eqref{eqn:permutation:symmetry}, the family of one-body operators  $\{B_j\}_{j=1}^N$ defines a family of identically distributed random variables, whose distribution is determined by the wave function $\Psi$ via the spectral theorem. 
The probability that the corresponding random variable $B_j$ takes values in a set $A\subset\R$ is given by
\begin{equation}
\mathbb{P}_\Psi(B_j\in A)=\lr{\Psi,\chi_A(B_j)\Psi}\,,
\end{equation}
where $\chi_A$ denotes the characteristic function of the set $A$ and where $\lr{\cdot,\cdot}$ denotes the inner product of $\fHN$. Functions of self-adjoint operators are defined via the functional calculus.
Note that the operators  $\sum_j B_j$ are formally the analogue of sample averages, which, in probability theory, are often interpreted as repeated measurements. This interpretation does not apply in our setting: the operator $\sum_j B_j$ does not describe $N$ single-particle measurements on $N$ copies of the system (these measurements would always be independent of each other).

\medskip
If the $N$-body wave function is a product state, i.e., if $\Psi=\varphi^{\otimes N}$ for some $\varphi\in\fH$, the random variables $B_j$ are independent and identically distributed (i.i.d.). Consequently,
$N^{-1}\sum_j B_j$ satisfies the law of large numbers (LLN), and the fluctuations around the expectation value are, in the limit $N\to\infty$, described by the central limit theorem (CLT). Moreover, for large but finite $N$, the fluctuations can be expanded in an asymptotic Edgeworth series, providing higher order corrections to the central limit theorem to any order in $1/\sqrt{N}$ (see Section \ref{subsec:edgeworth:ideal:gas} for a more detailed discussion).\medskip

A factorized wave function $\Psi=\varphi^{\otimes N}$ describes the ground state of an ideal Bose gas, i.e., a system without interactions between the particles.
In this work, we are interested in the situation where the bosons interact weakly with each other.
We consider a system of $N$ bosons in $\R^d$ described by the many-body Hamiltonian
\begin{equation}\label{HN}
\HN = \sum\limits_{j=1}^N(-\Delta_j + V(x_j)) + \frac{1}{N-1}\sum\limits_{1\leq i<j\leq N}v(x_i-x_j)
\end{equation}
acting on $\fH^N_\sym$, under suitable assumptions on the interaction $v$ and the external trapping potential $V$ (see Section \ref{subsec:assumptions}). This describes a Bose gas  in the so-called mean-field (or Hartree) regime, where the interactions are weak and long-ranged.
We consider the ground state $\PsiNgs$ and suitable low-energy excited states $\PsiNex$ of the Hamiltonian $\HN$, i.e.,
\begin{equation}\label{eqn:gs}
\HN\PsiNgs= \mathcal{E}_N^\gs\PsiNgs\,,
\qquad \PsiNgs\in\fH^N_\sym
\end{equation}
and
\begin{equation}
\HN\PsiNex= \mathcal{E}_N^\ex\PsiNex\,,
\qquad \PsiNex\in\fH^N_\sym\,, 
\end{equation}
where $\Egs:=\inf\mathrm{spec}(\HN)$ is the ground state energy and $\Eex$ denotes a suitable excited eigenvalue of $\HN$ (see Definition \ref{def:ENmex}). 
Due to the interactions between the particles, these states are no product states but correlated.
Consequently, the family $\{B_j\}_j$ of one-body operators defines a family of (weakly) dependent random variables. In fact, one can deduce from \cite{spectrum} that their covariance is
\begin{equation}\label{cov}
\text{Cov}_{\PsiN}[B_i,B_j] := \bE_\PsiN[B_i B_j] - \bE_\PsiN[B_i]\, \bE_\PsiN[B_j] = \mathcal{O}(N^{-1})  \qquad (i\neq j),
\end{equation}
where $\bE_\PsiN[\cdot]:=\lr{\PsiN,\cdot \,\PsiN}$.
Despite this dependence, the family $\{B_j\}_j$ satisfies a LLN, which is comparable to the situation of  i.i.d.\ random variables (see Section \ref{subsec:LLN}).
Moreover, one can prove a CLT (see, e.g., \cite{benarous2013, buchholz2014,rademacher2019_2,rademacher2021}), which is a result  of the formal analogy of quasi-free states and Gaussian random variables. Due to the dependence of the random variables $\{B_j\}$, the variance of the limiting Gaussian in the CLT is not given by $\Var_\varphi[B]$ but differs by $\mathcal
{O}(1)$
(see Section \ref{subsec:CLT}).\footnote{Strictly speaking, this implies that the result is no (standard) CLT in the classical sense of probability theory. However, this notion has been used in all previous works in the context of the Bose gas (\cite{benarous2013, buchholz2014,rademacher2019_2,rademacher2021}), and we use it here as well.
}

In this work, we prove that the statistics of bounded one-body operators with respect to the $N$-body ground state $\PsiNgs$ admit a weak Edgeworth expansion, which differs from the expansion for the i.i.d.\ case due to the interactions. Moreover, we prove an Edgeworth-type expansion for a class of low-energy excited states $\PsiNex$.

\subsection{Assumptions}\label{subsec:assumptions}

It is well known that the ground state $\PsiNgs$ as well as the low-energy excited states $\PsiNex$ of $\HN$  exhibit Bose--Einstein condensation (BEC), i.e., 
\begin{equation}\label{RDM}
\lim\limits_{N\to\infty}\Tr_{\fH^k}\left|\gamma^{(k)}_N-|\varphi\rangle\langle\varphi|^{\otimes k}\right|=0
\end{equation}
for any $k\geq 0$. Here, $|\varphi\rangle\langle\varphi|$ denotes the projector onto $\varphi\in\fH$, i.e., the operator with integral kernel $\varphi(x)\overline{\varphi(y)}$, and $\gamma_N^{(k)}$ denotes the $k$-particle reduced density matrix of $\PsiN\in\{\PsiNgs,\PsiNex\}$, whose integral kernel is defined as
\begin{equation}
\gamma_N^{(k)}(x_1\mydots x_k;y_1\mydots y_k):=\int_{\R^{(N-k)d}}\PsiN(x_1\mydots x_N)\overline{\PsiN(y_1\mydots y_k,x_{k+1}\mydots x_N)}\dx_{k+1}\mycdots \dx_N\,.
\end{equation}
The condensate wave function $\varphi$ is given by the minimizer of the Hartree energy functional,
\begin{equation}\label{def:Hartree:functional}
\cEH[\phi]:=\int\limits_{\R^d}\left(|\nabla \phi(x)|^2+V(x)|\phi(x)|^2\right)\dx
+\tfrac12\int\limits_{\R^{2d}}v(x-y)|\phi(x)|^2|\phi(y)|^2\dx\dy\,,
\end{equation}
for $\phi\in\cQ(-\Delta+V)$ under the mass constraint $\norm{\phi}_\fH=1$. 
The minimizer $\varphi$ solves the stationary Hartree equation $h\varphi=0$ in the sense of distributions, where $h$ is the operator on $\D(h)=\D(-\Delta+V)\subset\fH$ defined by
\begin{equation}\label{h:H}
\hH:=-\Delta+V+v*\varphi^2-\mH\,,\qquad
\mH:=\lr{\varphi,\left(-\Delta+V+v*\varphi^2\right)\varphi}\,.
\end{equation} 
The corresponding Hartree energy is denoted by
\begin{equation}
\eH := \cEH[\varphi]\,.
\end{equation}
We make the following assumptions on the interaction potential $v$ and the trap $V$, which, in particular, ensure that $\varphi$ is unique and can be chosen real-valued:

\begin{assumption}\label{ass:V}
Let $V:\R^d\to\R$ be measurable, locally bounded and  non-negative and let $V(x)$ tend to infinity as $|x|\to  \infty$, i.e.,
\begin{equation}
\inf\limits_{|x|>R}V(x)\to\infty \text{ as } R\to \infty\,.
\end{equation}
\end{assumption}
\begin{assumption}\label{ass:v}
Let $v:\R^d\to\R$ be measurable with $v(-x)=v(x)$ and $v\not\equiv 0$, and assume that there exists a constant $C>0$ such that, in the sense of operators on $\cQ(-\Delta)=H^1(\R^d)$,
\begin{equation}
|v|^2\leq  C\left(1-\Delta\right)\,.  \label{eqn:ass:v:2:Delta:bound}
\end{equation}
Besides, assume that $v$ is of positive type, i.e., that it has a non-negative Fourier transform.
\end{assumption}
\begin{assumption}\label{ass:cond}
Assume that there exist constants $C_1\geq0$ and $0<C_2\leq 1$, as well as a function $\varepsilon:\N\to\R_0^+$ with 
\begin{equation}
\lim\limits_{N\to\infty} N^{-\frac13}\varepsilon(N) \leq C_1\,,
\end{equation}
such that 
\begin{equation}\label{eqn:ass:cond}
\HN-N\eH\geq C_2 \sum\limits_{j=1}^N\hH_j-\varepsilon(N)
\end{equation}
in the sense of operators on $\D(\HN)$.
\end{assumption}

Assumption \ref{ass:V} ensures that $V$ is a confining potential; an example is the harmonic oscillator potential, $V(x)=x^2$. 
Assumption~\ref{ass:cond} ensures that low-energy eigenstates of $\HN$ exhibit complete BEC in the Hartree minimizer, with a sufficiently strong rate.
Assumptions \ref{ass:v} and \ref{ass:cond} are, for example, satisfied by any bounded and positive definite interaction potential $v$, and by the repulsive three-dimensional Coulomb potential, $v(x)=1/|x|$. 

Assumptions \ref{ass:V} to \ref{ass:cond} are precisely the assumptions made in \cite{spectrum}. They ensure that we can expand the low-energy eigenstates of $\HN$ and the corresponding energies in an asymptotic series in $1/\sqrt{N}$ (see Section  \ref{subsec:eigenstates:HN}), which is crucial for deriving the Edgeworth expansions.

Our main result holds for the ground state $\PsiNgs$ of $\HN$ and for a class of excited eigenstates $\PsiNex\in\ENmex$. The set $\ENmex\subset \fH^N_\sym$ consists of all eigenstates $\PsiNex$ of $\HN$ where $\HN\PsiNex=\Eex\PsiNex$ such that $\Eex-N\eH$ converges to a non-degenerate eigenvalue of the Bogoliubov Hamiltonian, and where the corresponding Bogoliubov eigenstate is a state with $\eta$ quasi-particles (see Definition \ref{def:ENmex}). In particular, the ground state $\PsiNgs$ is contained in $\ENmex$ for $\eta=0$.

\subsection{Main Result}
We are interested in the statistics of the symmetrized operators $\sum_j B_j$. After centering around the expectation value, we rescale by dividing by $\sqrt{N}$. This scaling is chosen as it is  the size of the standard deviation of $\sum_j B_j$, which follows from  \eqref{cov} and \eqref{RDM} because
\begin{equation}
\Var_{\PsiN}\Bigg[\sum_{j=1}^N B_j\Bigg]
=  \sum_{1\leq j\neq k\leq N}\text{Cov}_{\PsiN}[ B_jB_k]+\sum_{j=1}^N\Var_{\PsiN}[B_j]
=\mathcal{O}(N)\,.
\end{equation}
This leads to the random variable
\begin{equation}
\cBN:=\frac{1}{\sqrt{N}}\sum_{j=1}^N(B_j-\bE_\PsiN[B])
\end{equation}
for self-adjoint $B\in\cL(\fH)$, 
where $\bE_\PsiN$ denotes the expectation value of a random variable  with respect to the probability distribution determined by $\PsiN$. 
Moreover, we consider operators $B$ such that the Hartree minimizer $\varphi$ is not an eigenstate of $B$. This is equivalent to the statement that the standard deviation $\sigma$ of the limiting Gaussian in the CLT (see our theorem below) is nonzero, see \eqref{eqn:sigma}.
Our main result is the following:

\begin{theorem}\label{thm:edgeworth}
Let Assumptions \ref{ass:V} to \ref{ass:cond} hold and let $\PsiN\in\ENmex$ for some $\eta\in\N_0$, with $\ENmex$ as in Definition \ref{def:ENmex}. Let $a\in \N_0$ and $g\in L^1(\R)$ such that its Fourier transform $\hat{g}\in L^1(\R,(1+|s|^{3a+4})$.
Then, for any self-adjoint bounded operator $B\in\cL(\fH)$ such that the Hartree minimizer $\varphi$ is not an eigenstate of $B$,
\begin{equation}\label{eqn:thm}
\left|\mathbb{E}_{\PsiN}[g(\cBN)]
- \sum_{j=0}^aN^{-\frac{j}{2}}
\int\dx\, g(x)\mathfrak{p}_j(x) \, \frac{1}{\sqrt{2\pi\sigma^2}} \e^{-\frac{x^2}{2\sigma^2}} \right|
\leq C_B(a,g) N^{-\frac{a+1}{2}}\,
\end{equation}
for $\sigma$ as in \eqref{eqn:sigma}. Here, the functions $\mathfrak{p}_j(x)$ are polynomials of finite degree with real coefficients depending on $B$, $V$ and $v$. The error can be estimated as
\begin{equation}\label{eqn:C_B(a,g)}
C_B(a,g)\leq C(a) \big( 1+\onorm{B}^{3a+4} \big) \int_\R\ds\, |\hat{g}(s)| \left(1+|s|^{3a+3}+N^{-\frac{1}{2}}|s|^{3a+4}\right)
\end{equation}
for some $C(a)>0$, where $\onorm{\cdot}$ denotes the operator norm on $\cL(\fH)$.
\thmit{
\item \label{thm:gs}
If $\PsiN=\PsiNgs\in\ENzex$, then  $\mathfrak{p}^\gs_j$ is a polynomial of degree $3j$ which is even/odd for $j$ even/odd. In particular,
\begin{subequations}
\begin{eqnarray}
\mathfrak{p}^\gs_0 (x)&=&1\,,\\
\mathfrak{p}^\gs_1(x) &=& \frac{\alpha_3}{6\sigma^3}H_3\left(\frac{x}{\sigma}\right)\,,
\end{eqnarray}
\end{subequations}
with $\alpha_3$ as in \eqref{def:alpha} and where $H_3$ is the third Hermite polynomial (see \eqref{def:Hermite}).

\item \label{thm:ex}
If $\PsiN=\PsiNex\in \ENmex$ for some $\eta>0$, then $\mathfrak{p}_j^\ex$ is a polynomial of degree $3j+2\eta$ which is even/odd for $j$ even/odd. The leading order $\mathfrak{p}_0^\ex$ is computed in Proposition~\ref{thm:no:clt}.
}
\end{theorem}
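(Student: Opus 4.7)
The natural starting point is Fourier inversion. Writing $g(x)=\frac{1}{2\pi}\int_\R \hat g(s)\e^{\i s x}\ds$ and doing the same for the Gaussian-weighted polynomials on the right-hand side of \eqref{eqn:thm}, the theorem reduces to an expansion of the characteristic function $\phi_N(s):=\bE_{\PsiN}[\e^{\i s\cBN}]$ of the form
\begin{equation*}
\Big|\phi_N(s)-\e^{-\sigma^2 s^2/2}\sum_{j=0}^a N^{-j/2}P_j(\i s)\Big|\ls (1+\onorm{B})^{3a+4}\bigl(|s|^{3a+3}+N^{-1/2}|s|^{3a+4}\bigr)N^{-(a+1)/2},
\end{equation*}
where $P_j$ is the polynomial whose Fourier--Gaussian inverse is $\mathfrak{p}_j$. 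The $L^1$-weight $(1+|s|^{3a+4})$ on $\hat g$ then absorbs the $s$-growth and yields \eqref{eqn:C_B(a,g)}; the operator-norm dependence arises because the highest-order term in $s$ contains at most $3a+4$ fields, each controlled by $\onorm{B}$.

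\textbf{Transport to Fock space and state expansion.} Next I would apply the excitation map $\UNp:\fH^N_\sym\to\FNp\subset\Fp$ and set $\chi_N:=\UNp\PsiN$. After splitting $B=\lr{\varphi,B\varphi}+\tilde B$ with $\tilde B\varphi\perp\varphi$, a direct computation gives
\begin{equation*}
\UNp\cBN\UNp^*=\FockB_0+N^{-1/2}\FockB_1+N^{-1}\FockB_2,
\end{equation*}
where $\FockB_0$ is a field operator linear in $a,\ad$ with kernel built from $\tilde B\varphi$, while $\FockB_1,\FockB_2$ absorb the $\sqrt{1-\Np/N}$ factors expanded in $\Np/N$. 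In parallel, the spectral/eigenstate expansion of \cite{spectrum} provides, for any $K$ large enough,
\begin{equation*}
\chi_N=\sum_{n=0}^{K}N^{-n/2}\,\Chin+r_{N,K},\qquad \norm{r_{N,K}}_\Fp\ls N^{-(K+1)/2},
\end{equation*}
where $\Chiz$ is a fixed $\eta$-quasi-particle state of the Bogoliubov Hamiltonian and each $\Chin$ is a polynomial in $a,\ad$ applied to $\Chiz$. Choosing $K$ sufficiently large in terms of $a$ makes $r_{N,K}$ negligible at the required order.

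\textbf{Duhamel expansion and Bogoliubov diagonalization.} The quadratic Bogoliubov Hamiltonian is diagonalized by the unitary $\BogU$ of \cite{spectrum}. Under conjugation by $\BogU$, the field operator $\FockB_0$ becomes a pure field $\ad(f)+a(f)$ with $\norm{f}^2=\sigma^2$, whose exponential is a Weyl operator; its expectation in the Bogoliubov vacuum equals $\e^{-\sigma^2 s^2/2}$. To reach order $a$, I would iterate the Duhamel identity
\begin{equation*}
\e^{\i s(A_0+\varepsilon A_1)}=\e^{\i s A_0}+\i\varepsilon\int_0^s\e^{\i(s-\tau)A_0}A_1\,\e^{\i\tau(A_0+\varepsilon A_1)}\d\tau
\end{equation*}
with $\varepsilon=N^{-1/2}$, $A_0=\BogU\FockB_0\BogU^*$, and $A_1=\BogU(\FockB_1+N^{-1/2}\FockB_2)\BogU^*$. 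The resulting matrix elements between the vectors $\BogU\Chin$ are (near-)vacuum expectations of products of Weyl operators and polynomials in $a,\ad$; Wick's theorem evaluates each to a polynomial in $\i s$ multiplied by $\e^{-\sigma^2 s^2/2}$. Collecting orders in $N^{-1/2}$ yields the coefficients $P_j(\i s)$, and hence $\mathfrak{p}_j$ after Fourier inversion.

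\textbf{Degree, parity, and main obstacle.} The degree and parity claims in \ref{thm:gs}--\ref{thm:ex} would follow by bookkeeping: each of the $j$ units of $N^{-1/2}$ enters either as a Duhamel step (supplying one factor of $\i s$ and a cubic monomial in $a,\ad$ from $\FockB_1$) or as a step in the state expansion (adding one excitation to $\Chin$); together with the $2\eta$ creation operators already carried by $\Chiz$, all admissible Wick contractions produce precisely a degree-$(3j+2\eta)$ polynomial in $\i s$, of matching parity because $\FockB_0$ and $\FockB_1$ are odd under $a\mapsto-a$. The explicit form $\mathfrak{p}_1^\gs=\frac{\alpha_3}{6\sigma^3}H_3(x/\sigma)$ comes from the unique nonzero first-order Duhamel term when $\eta=0$. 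The main obstacle is controlling the Duhamel remainder uniformly in $N$ with only polynomial growth in $|s|$: this requires propagating moment bounds $\bE[\Np^k]\ls 1$ through the Weyl dynamics $\e^{\i\tau A_0}$ and combining them with the $\Fp$-norm estimates for $r_{N,K}$ from \cite{spectrum}. A secondary technicality is that $\FockB_1,\FockB_2$ are only bounded relative to $\Np$, so all commutations and exponentiations must be justified in $\Np$-graded norms before the final Fourier-inversion bound can be read off.
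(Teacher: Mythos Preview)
Your overall strategy matches the paper's: Fourier reduction to the characteristic function, transport via $\UNp$, Duhamel expansion of $\e^{\i s\FockB}$ around the Weyl generator $\e^{\i s\FockB_0}$, Bogoliubov conjugation turning $\e^{\i s\FockB_0}$ into a genuine Weyl operator, and evaluation of the resulting shifted vacuum expectations. You also correctly identify the main analytic obstacle (polynomial-in-$|s|$ control of the Duhamel remainder via propagation of $\Np$-moments through Weyl operators, combined with the state-expansion bounds from \cite{spectrum}).

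However, your degree/parity bookkeeping contains errors that would make that part of the argument fail as written. First, the decomposition $\UNp\cBN\UNp^*=\FockB_0+N^{-1/2}\FockB_1+N^{-1}\FockB_2$ does not terminate: the Taylor expansion of $\sqrt{1-\Np/N}$ produces operators $\FockB_\l$ at \emph{every} order $N^{-\l/2}$, with $\FockB_{2\l}\sim \ad(f)\Np^\l+\Np^\l a(f)$ and $\FockB_{2\l+1}$ a scalar for $\l\geq1$ (Lemma~\ref{lem:B}); you need all of them up to order $a$. Second, $\FockB_1=\d\Gamma(q\tilde Bq)-B^{(1)}$ is \emph{quadratic} in $a,\ad$, not cubic, and it is \emph{even} under $a\mapsto -a$, not odd; so neither your degree count nor your parity argument goes through as stated. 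The correct degree $3j+2\eta$ arises because Weyl conjugation $\e^{\i\tau\FockB_0}\FockB_1\e^{-\i\tau\FockB_0}$ trades creation/annihilation operators for powers of $\tau$, producing after the $\tau$-integration up to $s^3$ per Duhamel insertion of $\FockB_1$; the parity tracking requires the more careful classification in Proposition~\ref{prop:computation}. Third, $\mathfrak{p}_1^\gs$ is not produced by ``the unique nonzero first-order Duhamel term'': at order $N^{-1/2}$ one also has the state-expansion cross terms $\lr{\Chi_0,\e^{\i s\FockB_0}\Chi_1}+\mathrm{c.c.}$ and the expectation correction $-B^{(1)}$, and the constant $\alpha_3$ in \eqref{def:alpha} genuinely mixes all three contributions (Proposition~\ref{prop:polynomials}).
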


\begin{remark}\label{rem:CLT}
Theorem \ref{thm:edgeworth} implies a quantitative version of the CLT for the ground state with improved rate. 
Following the proof of \cite[Corollary 1.2]{buchholz2014}, we approximate the characteristic function $\chi_{[\alpha,\beta]}$ for some $\alpha,\beta\in\R$ from below and above by some smooth and compactly supported functions $g^\varepsilon_-$ and $g^\varepsilon_+$. For $\varepsilon>0$, we define these functions as
$g^\varepsilon_{\pm}:=\chi_{[\alpha\mp\varepsilon,\beta\pm\varepsilon]}*\zeta_\varepsilon$
for $\zeta_\varepsilon(x)=\varepsilon^{-1}\zeta(x/\varepsilon)$, where $\zeta\in\mathcal{C}^\infty_c(\R)$ is some non-negative function such that $\zeta(x)=0$ for $|x|>1$ and $\int_\R \zeta=1$. 
Consequently, 
\begin{equation}
\bE_{\PsiNgs}[g^\varepsilon_-(\cBN)]\leq\P_{\PsiNgs}(\cBN\in[\alpha,\beta])\leq \bE_{\PsiNgs}[g^\varepsilon_+(\cBN)]\,.
\end{equation}
Analogously to \cite{buchholz2014}, one obtains the estimate $|\hat{g}^\varepsilon_{\pm}(s)|\leq C|\hat{\zeta}(\varepsilon s)| \min\{|s|^{-1},|\beta-\alpha|\}$ for some constant $C>0$, hence Theorem \ref{thm:edgeworth} leads (for any $a\in\N_0$)  to
\begin{equation}
\left|\bE_{\PsiNgs}\left[g^\varepsilon_{\pm}(\cBN)\right]-\int g^\varepsilon_{\pm}(x)b_a(x)\dx\right|
\leq C(a\left(N^{-\frac{a+1}{2}}\varepsilon^{-(3a+3)}+N^{-\frac{a+2}{2}}\varepsilon^{-(3a+4)}\right)\,,
\end{equation}
where the constant $C$ depends on $B$, $\alpha$ and $\beta$ and where we abbreviated
\begin{equation}
b_a(x):=\sum_{j=0}^a N^{-\frac{j}{2}}\mathfrak{p}_j^\gs(x)\frac{1}{\sqrt{2\pi\sigma^2}}\e^{-\frac{x^2}{2\sigma^2}}\,.
\end{equation}
Since $|\int_\R g^\varepsilon_{\pm}b_a -\int_\alpha^\beta b_a(x)|\leq C\varepsilon$, this yields
\begin{equation}\label{eqn:remark:CLT}
\left|\P_{\PsiNgs}(\cBN\in[\alpha,\beta])-\int_\alpha^\beta b_a(x)\dx\right|\leq C(a)\left(\varepsilon+N^{-\frac{a+1}{2}}\varepsilon^{-(3a+3)}+N^{-\frac{a+2}{2}}\varepsilon^{-(3a+4)}\right)\,.
\end{equation}
The right hand side of \eqref{eqn:remark:CLT} is minimal for  $\varepsilon=N^{-\frac{a+1}{6a+8}}$, which, in particular, implies that it is always larger than $N^{-\frac16}$.
Consequently, choosing $a$ sufficiently large yields
\begin{equation}
\left|\P_{\PsiNgs}(\cBN\in[\alpha,\beta])-\frac{1}{\sqrt{2\pi\sigma^2}}\int_\alpha^\beta \e^{-\frac{x^2}{2\sigma^2}}\dx\right|\leq C_\gamma N^{-\gamma}\qquad\text{for any }\gamma<\frac16\,.
\end{equation}
This improves the previous estimate $N^{-1/8}$, which follows analogously to \cite{rademacher2019_2} by taking into account only the leading order $a=0$.
\end{remark}

\begin{remark}\label{rem:weak:EE}
Theorem \ref{thm:edgeworth} constitutes a weak Edgeworth expansion as introduced in \cite{goetze1983,breuillard2005, fernando2021}. In particular, our result does not imply an asymptotic expansion of the probability $\P_{\PsiN}(\cBN\in[\alpha,\beta])$.
The reason why we can only state our result in this weak form is that our error estimate when truncating the expansion of the characteristic function $\bE_{\PsiN}[\e^{\i s\cBN}]$ grows polynomially in $s$ (Proposition \ref{prop:S}). Hence, we can not simply apply the Fourier transform to obtain an expansion of the probability density. It is an open question whether a strong Edgeworth expansion exists, i.e., whether there exist constants $C_a$ such that
\begin{equation}
\left|\P_{\PsiNgs}\left(\cBN\in[\alpha,\beta]\right)-  \int\limits_\alpha^\beta
\sum\limits_{j=0}^aN^{-\frac{j}{2}}\frac{\mathfrak{p}_j(x)}{\sqrt{2\pi}\sigma}\e^{-\frac{x^2}{2\sigma^2}}\dx\right|\overset{\text{(?)}}{\leq} C_a N^{-\frac{a+1}{2}}\,.
\end{equation}

\end{remark}

If the $N$-body system is in its ground state $\PsiNgs$, Theorem \ref{thm:edgeworth} implies that  $\cBN$ admits a weak Edgeworth expansion although the random variables are not independent. However, the interactions affect the precise form of the Edgeworth series: the standard deviation $\sigma$ of the Gaussian as well as the polynomials $\mathfrak{p}_j^\gs$ differ from the expansion for the non-interacting Bose gas (see Sections \ref{subsec:edgeworth:ideal:gas} and \ref{sec_Edgeworth_int} for a detailed discussion). 
To prove Theorem~\ref{thm:edgeworth}, we expand the characteristic function 
$$\phi_N^\gs(s):=\lr{\PsiNgs,\e^{\i s\cBN}\PsiNgs}$$
in  powers of $N^{-1/2}$. To leading order, $\phi_N^\gs(s)$ is  given by the expectation value of a Weyl operator with respect to a quasi-free state . Quasi-free states satisfy a Wick rule  comparable to Wick's probability theorem for Gaussian random variables, and this formal analogy is the reason why we obtain a CLT for the ground state. Technically, we use an equivalent formulation of Wick's rule, namely the fact that a quasi-free state is a Bogoliubov transformation of the vacuum. This allows us to reduce the computation of $\phi_N^\gs(s)$ to the computation of  vacuum expectation values, which are non-zero only if they contain equal numbers of creation and annihilation operators.

For low-energy excited states, the leading order of the corresponding characteristic function $\phi_N^\ex(s)$ is no longer given by an expectation value with respect to a quasi-free state, but rather a state with a finite number of creation/annihilation operators acting on a quasi-free state. Consequently, the limiting distribution is not a Gaussian but a Gaussian 
multiplied with a polynomial. One still obtains an Edgeworth-type expansion, but each order of the distribution is now the Gaussian times a (different) polynomial.
\medskip

Theorem \ref{thm:edgeworth} is, to the best of our knowledge, the first derivation of an Edgeworth expansion for an interacting quantum many-body system. 
Asymptotic expansions for (weakly) dependent random variables have been derived in \cite{nagaev1957,nagaev1961,herve2010} for Markov processes,
in \cite{goetze1983} for stochastic processes which are approximated by a suitable Markov process, and in \cite{coelho1990} in the context of dynamical systems. In \cite{fernando2021}, the authors prove the existence of Edgeworth expansions for weakly dependent random variables under fairly generic conditions, which includes random variables arising from dynamical systems and Markov chains but excludes our model\footnote{In \cite{fernando2021}, the authors consider a Banach space $B$ and assume that the characteristic function is of the form $\phi_N(s)=\l(\mathcal{L}^N_sv)$, where $\cL_s:B\to B$ is a family of bounded linear operators and where $v\in B$, $\l\in B'$. Applied to our setting, we would identify $v$ with the ground state $\Psi_N$, and $\ell$ with the projection on the ground state. However, $e^{is\cBN}$ is not of the form $\cL_s^N$ for some $N$-independent $\cL_s$. Even if we would introduce $\cL_s = e^{is\frac{1}{N}\cBN}$, this operator would not satisfy the assumptions made in \cite{fernando2021}, which include that the spectrum of $\cL_s$ is contained in the open disc of radius $1$ for all $s \neq 0$, and that $\| \cL_s^N \| \leq \frac{1}{N^{r_2}}$ for some $r_2>0$.}.

As discussed in Section \ref{subsec:edgeworth:ideal:gas} for the i.i.d.\ situation, Theorem \ref{thm:edgeworth} yields a very precise description in the center of the distribution. In contrast, it does not generally provide a good approximation of the tails of the distribution. For the dynamics generated by $\HN$,  large deviation estimates have been proven in \cite{kirkpatrick2021,rademacher2021_2}.

We expect that Theorem \ref{thm:edgeworth} can be generalized to all situations where the $N$-body wave function admits an (explicitly known) asymptotic expansion in the spirit of Lemma \ref{lem:known}. For example, it seems obvious that a dynamical Edgeworth expansion should exist, which provides corrections to \cite{benarous2013}; moreover, generalizations to $k$-body operators as in \cite{rademacher2021} and to $k$ one-body operators as in \cite{buchholz2014} seem feasible.
\medskip

The remainder of the article is structured as follows: 
In Section \ref{sec:framework}, we summarize the quantum many-body framework  and collect known results for the mean-field Bose gas which we require for the proof.
Section \ref{sec:prob} is a review of the probabilistic picture, including existing results on the CLT for the interacting Bose gas. 
In particular,  we analyse the effect of the interactions on the Edgeworth series (Sections \ref{subsec:edgeworth:ideal:gas} and \ref{sec_Edgeworth_int}).
Finally, Section \ref{sec:proofs} contains the proof of  Theorem~\ref{thm:edgeworth}.

\section{Many-body framework}\label{sec:framework}

\subsection{Excitations from the condensate}\label{subsec:pre:excitations}

We consider $N$-body states $\Psi$ which exhibit complete BEC in the Hartree minimizer $\varphi$ in the sense of \eqref{RDM}. However, this does  in general not imply that $\Psi=\varphi^{\otimes N}$; in fact, an $\mathcal{O}(1)$ fraction of the particles forms excitations from the condensate. To describe them mathematically, one recalls, e.g.\  from \cite{lewin2015_2}, that any $\Psi\in\fH^N_\sym$ can be decomposed as
\begin{equation}
\Psi=\sum\limits_{k=0}^N{\varphi}^{\otimes (N-k)}\otimes_\sym\chi^{(k)}\,,
\qquad \chi^{(k)}\in \bigotimes\limits_\sym^k \left\{\varphi\right\}^\perp\,,
\end{equation}
with the usual notation $\left\{\varphi\right\}^\perp :=\left\{\phi\in \fH: \lr{\phi,\varphi}=0\right\}$. 
The sequence
\begin{equation}\label{def:ChiN:Chi}
\Chi:=\big(\chi^{(k)}\big)_{k=0}^N
\end{equation}
of $k$-particle excitations forms a vector in the (truncated) excitation Fock space over $\left\{\varphi\right\}^\perp $,
\begin{equation}\label{Fock:space}
\FNp=\bigoplus_{k=0}^N\bigotimes_\sym^k \left\{\varphi\right\}^\perp 
\;\subset \;
\Fp=\bigoplus_{k=0}^\infty\bigotimes_\sym^k \left\{\varphi\right\}^\perp \,,
\end{equation}
and vectors in $\Fp$ (resp.\ $\FNp$) are denoted as $\bPhi$ (resp.\ $\bPhi_{\leq N}$).
The creation and annihilation operators on $\Fp$, $\ad(f)$ and $a(f)$ for $f\in\left\{\varphi\right\}^\perp$, are defined in the usual way as
\begin{subequations}
\begin{align}
(\ad(f)\bPhi)^{(k)}(x_1\mydots x_k) &= \frac{1}{\sqrt{k}}\sum\limits_{j=1}^kf(x_j)\phi^{(k-1)}(x_1\mydots x_{j-1},x_{j+1}\mydots x_k)\,, \\
(a(f)\bPhi)^{(k)}(x_1\mydots x_k) &= \sqrt{k+1}\int\d x\overline{f(x)}\phi^{(k+1)}(x_1\mydots x_k,x)
\end{align} 
\end{subequations}
for $k\geq 1$ and $k\geq 0$, respectively, and $\bPhi\in\Fp$. They can be expressed in terms of the operator-valued distributions $\ad_x$ and $a_x$,
\begin{equation}
\ad(f)=\int\d x f(x)\,\ad_x\,,\qquad a(f)=\int\d x\overline{f(x)}\,a_x\,,
\end{equation}
which satisfy the canonical commutation relations 
\begin{equation}
[a_x,\ad_y]=\delta(x-y)\,,\qquad [a_x,a_y]=[\ad_x,\ad_y]=0\,.
\end{equation}
We denote the second quantization in $\Fp$  (resp.\ $\Fock$) of an operator $A$  by $\d\Gamma_\perp(A)$ (resp.\ $\d\Gamma(A)$). The vacuum is denoted by $\vac$ and the number operator on $\Fp$ is given by
\begin{equation}
\Np:=\d\Gamma_\perp(\id)\,,\qquad (\Np\bPhi)^{(k)}=k\phi^{(k)}\;\text{ for }
\bPhi\in\Fp\,.
\end{equation}
An $N$-body state $\Psi$ is mapped onto its corresponding excitation vector $\Chi$  by the unitary excitation map  $\UNp$
\begin{equation}\label{intro:UNp}
\UNp:\fHN\to\FNp\,,\qquad \Psi\mapsto \UNp\Psi=\Chi\,,
\end{equation}
introduced in \cite{lewin2015_2}. 
For $f,g\in\{\varphi\}^\perp$, it acts as
\begin{subequations}\label{eqn:substitution:rules}
\begin{eqnarray}
\UNp\, \ad({\varphi})a({\varphi})\UNp^*&=&N-\Np\,,\\
\UNp\, \ad(f)a({\varphi}) \UNp^*&=&\ad(f)\sqrt{N-\Np}\,,\\
\UNp\, \ad({\varphi})a(g)\UNp^*&=&\sqrt{N-\Np}a(g)\,,\\
\UNp\, \ad(f)a(g)\UNp^*&=&\ad(f)a(g)
\end{eqnarray}
\end{subequations}
as identities on $\FNp$. We extend $\UNp$ trivially to a map to the full space $\Fp$.
Analogously, elements of $\FNp$ are naturally understood as elements of $\Fp$.

\subsection{Bogoliubov theory}\label{subsec:framework:Bog}

It was shown in \cite{spectrum} that the low-energy eigenstates of $\HN$ can be retrieved by perturbation theory around the eigenstates of the Bogoliubov Hamiltonian, which is given by
\begin{equation}\label{Bog_Ham_def}
\FockHz:=\boldKz+\boldKo+\boldKt+\boldKt^*\,.
\end{equation}
Here,
\begin{subequations}\label{eqn:K:notation}
\begin{eqnarray}
 \boldKz&:=&\int\dx\,\ad_x  \hH_x a_x\,,\label{eqn:K:notation:0}\\
 \boldKo&:=&\int\dx_1\dx_2\, (qKq)(x_1;x_2)\ad_{x_1} a_{x_2}\,,\label{eqn:K:notation:1}\\
 \boldKt&:=&\tfrac12\int\dx_1\dx_2\, (q_1q_2K)(x_1,x_2)\ad_{x_1}\ad_{x_2}\,,\label{eqn:K:notation:2}
\end{eqnarray}
\end{subequations}
for $h$ from \eqref{h:H}, where $K$  is the operator with kernel
\begin{equation}
K(x;y)=v(x-y)\varphi(x)\varphi(y)
\end{equation}
and where we used the orthogonal projectors
\begin{equation}\label{p:q}
p:=|\varphi\rangle\langle\varphi|\,,\qquad q:=\mathbbm{1}-p
\end{equation}
onto the condensate and its complement. \\

\subsubsection{Bogoliubov transformations}
The Bogoliubov Hamiltonian $\FockHz$ can be diagonalized by Bogoliubov transformations (see, e.g., \cite{solovej_lec}), which are defined as follows:
For $ F=f\oplus \overline{g} \in\{\varphi\}^\perp\oplus\{\varphi\}^\perp$, one defines the generalized creation and annihilation operators $A(F)$ and $\Ad(F)$ as
\begin{equation}\label{eqn:A(F)}
A(F)=a(f)+\ad(g)\,, \quad \Ad(F)=A(\cJ F)=\ad(f)+a(g)\,,
\end{equation}
where $ \cJ=\left(\begin{smallmatrix}0 & J\\J&0\end{smallmatrix}\right)$
with $(Jf)(x)=\overline{f(x)}$. 
An operator $\BogV$ on $\{\varphi\}^\perp\oplus\{\varphi\}^\perp$  such that 
\begin{equation}
\Ad(\BogV F)=A(\BogV\mathcal{J}F)\,,\qquad [A(\BogV F_1),\Ad(\BogV F_2)]=[A(F_1),\Ad(F_2)]\,,
\end{equation}
is called a (bosonic) Bogoliubov map. It can be written in the block form
\begin{equation}\label{BogV:block:form}
\BogV:=\begin{pmatrix}U & \Vbar\\V & \Ubar\end{pmatrix}\,,\quad U,V:\{\varphi\}^\perp\to \{\varphi\}^\perp\,,
\end{equation}
where $\overline{U}$ and $\overline{V}$ denote the operators with integral kernels $\overline{U(x,y)}$ and $\overline{V(x,y)}$, respectively.
If $V$ is Hilbert-Schmidt, $\BogV$ is unitarily implementable on $\Fp$, i.e., there exists a unitary transformation $\BogU:\Fp\to\Fp$, called a Bogoliubov transformation, such that
\begin{equation}\label{eqn:unit:impl}
\BogU A(F)\BogU^*=A(\BogV F)\,.
\end{equation}
The identity \eqref{eqn:A(F)} leads to a transformation rule of creation/annihilation operators under a Bogoliubov transformation,
\begin{equation}\begin{split}
\BogU \,a(f)\,\BogU^*&\;=\;a(Uf) +\ad(\overline{Vf})\,,\\
\BogU\,\ad(f)\,\BogU^*&\;=\;a(\overline{Vf})+\ad(Uf)\label{eqn:trafo:ax}
\end{split}\end{equation}
for $f\in\{\varphi\}^\perp$.
In particular, powers of $\Np$ conjugated with $\BogU$ can be bound as 
\begin{equation}\label{eqn:Number:BT}
\BogU(\Np+1)^b\BogU^* \leq C_\BogV^b\, b^b(\Np+1)^b \qquad (b\in\N)
\end{equation}
in the sense of operators on $\Fp$, where $C_\BogV:=2\norm{V}_\HS^2+\onorm{U}^2+1$ \cite[Lemma 4.4]{QF}.

\subsubsection{Quasi-free states}\label{subsubsec:QF}
Finally, we recall that a normalized state $\bPhi\in\Fp$ is called quasi-free if there exists a Bogoliubov transformation $\BogU$ such that
\begin{equation}
\bPhi=\BogU|\Omega\rangle\,.
\end{equation}
Quasi-free states satisfy Wick's rule (e.g. \cite[Theorem 1.6]{nam2011}: for $\Phi$ quasi-free, it holds that
\begin{subequations}\label{eqn:Wick}
\begin{align}
\lr{\bPhi,a^\sharp(f_1)\mycdots a^\sharp(f_{2n-1})\bPhi}_{\Fp} &= 0\,,\\
\lr{\bPhi,a^\sharp(f_1)\mycdots a^\sharp(f_{2n})\bPhi}_{\Fp} &= \sum\limits_{\sigma\in P_{2n}}\prod\limits_{j=1}^n \lr{\bPhi,a^\sharp(f_{\sigma(2j-1)})a^\sharp(f_{\sigma(2j)})\bPhi}_{\Fp}
\end{align}
\end{subequations}
for $a^\sharp\in\{\ad,a\}$, $n\in\N$ and $f_1\mydots f_{2n}\in \{\varphi\}^\perp$.
Here, $P_{2n}$ denotes the set of pairings
\begin{equation}
P_{2n}:=\{\sigma\in\mathfrak{S}_{2n}:\sigma(2a-1)<\min\{\sigma(2a),\sigma(2a+1)\} \;\forall a\in \{1,2\mydots 2n\} \}\,,
\end{equation}
where $\mathfrak{S}_{2n}$ denotes the symmetric group on the set $\{1,2\mydots 2n\}$.

\subsubsection{Eigenstates of $\FockH_0$}
We denote by $\BogUz:\Fp\to\Fp$ the Bogoliubov transformation that diagonalizes $\FockHz$, i.e.,
\begin{equation}\label{eqn:BT:diag}
\BogUz\FockHz\BogUz^*=\d\Gamma_\perp(D)+\inf\sigma(\FockHz)\,,
\end{equation} 
where $D>0$ is a self-adjoint operator on $\{\varphi\}^\perp$. It admits a complete set of normalized eigenfunctions, denoted as $\{\xi_j\}_{j\geq0}$.
The ground state $\Chi_0^\gs$ of $\FockHz$ is unique and given by
\begin{equation}
\Chiz^\gs=\BogUz^*\vac\,.
\end{equation}
Any non-degenerate excited eigenstate $\Chiz^\ex$ of $\FockHz$ can be expressed as
\begin{equation}\label{eqn:excited:states:FockHz}
\Chiz^\ex=\BogUz^*
\frac{\big(\ad(\xi_0)\big)^{\eta_0}}{\sqrt{\eta_0!}}
\frac{\big(\ad(\xi_1)\big)^{\eta_1}}{\sqrt{\eta_1!}}
\,\mycdots\,
\frac{\big(\ad(\xi_k)\big)^{\eta_{k}}}{\sqrt{\eta_{k}!}}
\vac
\end{equation}
for some $k\in\N_0$ and some tuple $(\eta_0\mydots \eta_{k})\in\N_0^{k+1}$.
Finally, the Bogoliubov map corresponding to $\BogUz$ is denoted by
\begin{equation}\label{eqn:U_0:V_0}
\BogVz=\begin{pmatrix} U_0 & \Vbar_0 \\ V_0 & \Ubar_0 \end{pmatrix}\,.
\end{equation}

\subsection{Low-energy eigenstates of $\HN$}\label{subsec:eigenstates:HN}

Assumptions \ref{ass:V} to \ref{ass:cond} ensure that $\HN$ has a unique ground state and a discrete low-energy spectrum. We will consider the following class of eigenstates of $\HN$:
\begin{definition}\label{def:ENmex}
Let $\eta\in\N_0$. Then $\PsiN\in\fH^N_\sym$ is an element of the set $\ENmex$ iff all of the following are satisfied:
\defit{
\item $\PsiN$ is an eigenstate of $\HN$, i.e., $\HN\PsiN=\mathcal{E}_N\PsiN$.
\item There exists a non-degenerate Bogoliubov eigenstate, $\FockH_0\Chi_0=E_0\Chi_0$, such that 
$$\lim\limits_{N\to\infty}\left(\mathcal{E}_N-N\eH\right)=E_0\,.$$
\item $\Chi_0$ is a state with $\eta$ quasi-particles, i.e., it is given by \eqref{eqn:excited:states:FockHz} with 
$\eta_0+\eta_1+\dots+\eta_k=\eta$.
}
In particular,
\begin{equation}
\PsiNgs\in\ENzex\,, 
\end{equation}
i.e., the ground state is contained in the set $\ENmex$ with zero quasi-particles.\end{definition}

To keep the notation simple, we will indicate the quasi-particle number $\eta$ only when it is inevitable to avoid ambiguities.
If $\PsiN\in\ENmex$ for some $\eta\in\N_0$, it was shown in  \cite[Theorem~3]{spectrum} that $\Chi=\UNp\Psi$ admits an asymptotic expansion in the parameter $(N-1)^{-1/2}$, namely
\begin{equation}\label{eqn:expansion:Chin:tilde}
\Big\|{\Chi-\sum\limits_{\l=0}^a(N-1)^{-\frac{\l}{2}}\tilde{\Chi}_\l}\Big\|\leq C(a)(N-1)^{-\frac{a+1}{2}}
\end{equation}
for some constant $C(a)>0$ and for coefficients $\tilde{\Chi}_\l\in\Fp$ given in \cite[Theorem 3, Eqn.\ (3.26)]{spectrum}.

For the proof of Theorem \ref{thm:edgeworth}, it is more convenient to have a full expansion of these states in powers of $N^{-1/2}$ instead of $(N-1)^{-1/2}$, which can be deduced from the results in \cite{spectrum} in a straightforward way.

\begin{lem}\label{lem:known}
Let Assumptions \ref{ass:V} to \ref{ass:cond} hold, let $\PsiN\in\ENmex$ for some $\eta\in\N_0$ and denote the corresponding excitation vector by $\Chi=\UNp\Psi$.
\lemit{
\item \label{lem:known:expansion:Chi:eta}
For any $a\in\N_0$, there exists a constant $C(a)>0$ such that
\begin{equation}\label{eqn:expansion:Chi:spectrum:eta}
\Big\|{\Chi-\sum\limits_{\l=0}^aN^{-\frac{\l}{2}}\Chi_\l}\Big\|\leq C(a)N^{-\frac{a+1}{2}}\,,
\end{equation}
where
\begin{equation}\label{eqn:def:Theta:spectrum:eta}
\Chi_\l=\BogUz^*\sum_{\substack{j=0\\\l+\eta+j\text{ even}}}^{3\l+\eta}\int\dx^{(j)}\Theta^{(\eta)}_{\l,j}(x^{(j)})\ad_{x_1}\mycdots\ad_{x_j}|\Omega\rangle
\end{equation}
for some functions $\Theta^{(\eta)}_{\l,j}\in L^2_\sym(\R^{dj})$.

\item \label{lem:known:moments:Chil}
For any $\l,b\in\N$, there exists a constant $C(\l,b)$ such that
\begin{equation}\label{lem:moments:Chil}
\norm{(\Np+1)^b\Chi_\l}\leq C(\l,b)\,.
\end{equation}

\item \label{lem:known:expansion:B}
Let $B\in\cL(\fH)$. For any $a\in\N_0$, there exists some constant $C(a)>0$ such that
\begin{equation}
\left|\lr{\PsiN,B_1\PsiN}-\sum_{\l=0}^a N^{-\l}B^{(\l)}\right|\leq C(a)\onorm{B}N^{-(a+1)}\,,
\end{equation}
where the coefficients 
\begin{equation}\label{eqn:B^(l)}
B^{(\l)}:=\sum_{k=1}^\l\tbinom{\l-1}{\l-k}\Tr\gamma_{1,k}B\in\R
\end{equation}
can be bounded as
\begin{equation}
|B^{(\l)}|\leq C(\l)\onorm{B}
\end{equation}
for some constants $C(\l)>0$. In particular, $B^{(0)}=\lr{\varphi,B\varphi}$, and
\begin{equation}\begin{split}
B^{(1)} = &\lr{\Chiz,\left(\ad(qB\varphi)+a(qB\varphi)\right) \Chi_1} 
+\lr{\Chi_1, \left(\ad(qB\varphi)+a(qB\varphi)\right) \Chiz}\\ 
&+ \lr{\Chiz,\d\Gamma(q \tB q) \Chiz}.
\end{split}\end{equation}

}
\end{lem}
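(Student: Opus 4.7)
The three claims are essentially re-packagings of the asymptotic expansion \eqref{eqn:expansion:Chin:tilde} from \cite{spectrum}, and I would treat them in order.

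For part (a), I would re-expand the $(N-1)^{-1/2}$-series in powers of $N^{-1/2}$ using the generalized binomial series
\begin{equation*}
(N-1)^{-\l/2} = N^{-\l/2}(1-1/N)^{-\l/2} = N^{-\l/2}\sum_{m\geq 0}\tbinom{-\l/2}{m}(-1)^m N^{-m},
\end{equation*}
substitute into \eqref{eqn:expansion:Chin:tilde}, and collect terms of equal power $N^{-\l/2}$ up to the desired order $a$. This produces $\Chi_\l$ as a finite rational combination of $\tilde{\Chi}_m$ with $m\leq\l$ having $m\equiv\l\pmod 2$, and the quantitative bound $(N-1)^{-(a+1)/2}\ls N^{-(a+1)/2}$ transfers directly to the error. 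The specific form \eqref{eqn:def:Theta:spectrum:eta} of $\Chi_\l$ is inherited from the analogous representation of $\tilde{\Chi}_m$ given in \cite[Eqn.\ (3.26)]{spectrum}: each $\tilde{\Chi}_m$ has the same $\BogUz^*$-structure, its non-vanishing creation-operator monomials are of degree at most $3m+\eta$ with $m+\eta+j$ even, and the re-expansion only mixes $\tilde{\Chi}_m$ with $m\leq\l$ of matching parity, so both the degree bound and the parity constraint pass to $\Chi_\l$.

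For part (b), I would write $\Chi_\l=\BogUz^*\phi_\l$ with $\phi_\l := \sum_{j=0}^{3\l+\eta}\int\dx^{(j)}\Theta^{(\eta)}_{\l,j}(x^{(j)})\ad_{x_1}\cdots\ad_{x_j}\vac$. Since $\phi_\l$ has particle number at most $3\l+\eta$, $(\Np+1)^{2b}$ acts on it as multiplication by a bounded constant, so $\norm{(\Np+1)^b\phi_\l}\leq(3\l+\eta+1)^b\norm{\phi_\l}$, and $\norm{\phi_\l}\leq C(\l)$ by the $L^2$-integrability of the $\Theta^{(\eta)}_{\l,j}$. The estimate \eqref{eqn:Number:BT} then gives
\begin{equation*}
\norm{(\Np+1)^b\Chi_\l}^2 = \lr{\phi_\l,\BogUz(\Np+1)^{2b}\BogUz^*\phi_\l} \leq C_{\BogVz}^{2b}(2b)^{2b}(3\l+\eta+1)^{2b}\norm{\phi_\l}^2,
\end{equation*}
which is the required bound.

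For part (c), the starting point is that by permutation symmetry $\lr{\PsiN,B_1\PsiN}=\frac{1}{N}\lr{\PsiN,\d\Gamma(B)\PsiN}=\frac{1}{N}\lr{\Chi,\UNp\,\d\Gamma(B)\,\UNp^*\Chi}$. Decomposing $B=\lr{\varphi,B\varphi}p+pBq+qBp+qBq$ and applying the substitution rules \eqref{eqn:substitution:rules} yields
\begin{equation*}
\UNp\,\d\Gamma(B)\,\UNp^* = \lr{\varphi,B\varphi}(N-\Np) + \ad(qB\varphi)\sqrt{N-\Np} + \sqrt{N-\Np}\,a(qB\varphi) + \d\Gamma_\perp(qBq).
\end{equation*}
Dividing by $N$, Taylor-expanding $\sqrt{1-\Np/N}$, inserting the expansion of $\Chi$ from part (a), and collecting powers of $N^{-1/2}$ gives a formal series, with the operator moment bounds of part (b) controlling the truncation error. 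The crucial point is that only integer powers of $N^{-1}$ survive: by the parity constraint in \eqref{eqn:def:Theta:spectrum:eta} each $\Chi_\l$ has a fixed $\Np$-parity $\l+\eta\pmod 2$, and Wick's rule \eqref{eqn:Wick} applied after conjugating by $\BogUz$ kills any matrix element whose total count of creation/annihilation operators has the wrong parity; summing these vanishings over all contributions gives an expansion in $N^{-\l}$, with $B^{(0)}=\lr{\varphi,B\varphi}$ and $B^{(1)}$ as stated. To identify the general coefficient as $B^{(\l)}=\sum_{k=1}^\l\binom{\l-1}{\l-k}\Tr\gamma_{1,k}B$, I would run the same computation with $B_1$ replaced by the kernel operator producing $\gamma_N^{(1)}$, obtaining an expansion $\gamma_N^{(1)}=|\varphi\rangle\langle\varphi|+\sum_{k\geq 1}(N-1)^{-k}\gamma_{1,k}$ (the most natural powers to use since \eqref{eqn:substitution:rules} produces $\sqrt{N-\Np}=\sqrt{(N-1)}\sqrt{1-(\Np-1)/(N-1)}$-type factors), and then re-expand $(N-1)^{-k}=\sum_{\l\geq k}\binom{\l-1}{k-1}N^{-\l}$, which matches the stated formula via $\binom{\l-1}{k-1}=\binom{\l-1}{\l-k}$.

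The main obstacle is the combinatorial bookkeeping in part (c): one must track, order by order, both the Taylor coefficients of $\sqrt{1-\Np/N}$ and the powers of $N^{-1/2}$ arising from the expansion of $\Chi$, while keeping the parity argument transparent. Everything else reduces to straightforward application of \cite{spectrum}, the substitution rules \eqref{eqn:substitution:rules}, and the operator bound \eqref{eqn:Number:BT}.
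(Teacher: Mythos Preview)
Your proposal is correct and follows the route the paper indicates: the paper does not give a detailed proof of this lemma but simply remarks that it ``can be deduced from the results in \cite{spectrum} in a straightforward way'' and refers to \cite{proceedings} for the coefficients $B^{(\l)}$. Your plan makes this straightforward deduction explicit --- the binomial re-expansion of $(N-1)^{-\l/2}$ for part (a), the finite-particle-number argument combined with \eqref{eqn:Number:BT} for part (b), and the substitution rules plus parity bookkeeping for part (c) --- and each step is sound; in particular, your parity argument (that the Bogoliubov-preserved $\Np$-parity of $\Chi_\l$ forces half-integer powers of $N^{-1}$ to cancel) is exactly the mechanism behind the integer-power expansion in part (c), and your identification of $\binom{\l-1}{\l-k}$ via re-expanding $(N-1)^{-k}$ is correct.
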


The functions $\Theta^{(\eta)}_{\ell,j}$ can be computed using perturbation theory, and we refer to \cite{spectrum} for the explicit expressions. In a similar way, one obtains explicit expressions for $B^{(\l)}$; see~\cite{proceedings}.

\section{Probabilistic picture}\label{sec:prob}
To illustrate the effect of the interactions, we compare in this section the random variables with probability distribution determined by $\PsiN\in\ENmex$ (for some $\eta\in\N_0$) with the random variables distributed according to the product state
\begin{equation}
\Psi_N^\ideal:=\varphi^{\otimes N}\,.
\end{equation} 
To underline differencences between the ground state 
$\PsiNgs\in\ENzex$ and excited states $\PsiNex\in\ENmex$  for $\eta>0$, we will indicate this in the notation by using the superscripts $^\gs$ and $^\ex$ when appropriate.

\subsection{Random variables}

A self-adjoint one-body operator $B\in\cL(\fH)$ defines a family $\{B_j\}_{j=1}^N$ of random variables with common probability distribution determined by the $N$-body wave function $\PsiN$. For $\Psi_N^\ideal$, the random variables are i.i.d., and the expectation value $\mathbb{E}_{\varphi}[B]$, the variance $\Var_{\varphi}[B]$ and the standard deviation $\sigma_\ideal$ are given by 
\begin{equation}
\mathbb{E}_{\varphi}[B]=\lr{\varphi,B\varphi}\,,\qquad
\Var_{\varphi}[B]=\sigma_\ideal^2=\lr{\varphi,B^2\varphi}-\lr{\varphi,B\varphi}^2\,.
\end{equation} 
For an eigenstate $\PsiN\in\ENmex$ of $\HN$, the random variables are no longer independent, and the corresponding quantities $\bE_\PsiN[B]$, $\Var_\PsiN[B]$ and $\sigma_N$  can be computed as
\begin{eqnarray}
\mathbb{E}_{\PsiN}[B]&=&\frac{1}{N}\sum_{j=1}^N\lr{\PsiN,B_j\PsiN}=\lr{\PsiN,B_1\PsiN}\,,\\
\Var_\PsiN[B]&=&\sigma_N^2\;=\;\lr{\PsiN,B_1^2\PsiN}-\lr{\PsiN,B_1\PsiN}^2
\end{eqnarray}
due to the bosonic symmetry \eqref{eqn:permutation:symmetry} of $\PsiN$.
Note that by \eqref{RDM},
\begin{equation}\label{eqn:limits:E:Var}
\lim\limits_{N\to\infty}\bE_{\PsiN}[B]=\bE_{\varphi}[B]\,,\qquad
\lim\limits_{N\to\infty}\Var_{\PsiN}[B]=\Var_{\varphi}[B]\,.
\end{equation}

\subsection{Law of large numbers}\label{subsec:LLN}
For the product state $\Psi_N^\ideal$, the weak LLN states that the empiric mean converges to its expectation value, i.e., 
\begin{equation}
\lim\limits_{N\to\infty}\P_{\Psi_N^\ideal} \left(\left|\frac{1}{N}\sum_{j=1}^N B_j-\lr{\varphi,B\varphi}\right|\geq\varepsilon\right)=0
\end{equation}
for any $\varepsilon>0$. Abbreviating $\tilde{B}:=B-\lr{\varphi,B\varphi}$, Markov's inequality yields for the interacting gas  (see, e.g., \cite[Sec.\ 1]{benarous2013})
\begin{eqnarray}
\P_{\PsiN} \left(\left|\frac{1}{N}\sum_{j=1}^N \tilde{B}_j\right|\geq\varepsilon\right)
&\leq&\frac{1}{N^2\varepsilon^2}\lr{\PsiN,\Big(\sum_{j=1}^N \tilde{B}_j\Big)^2\PsiN}\nonumber\\
&\leq&\varepsilon^{-2}\lr{\PsiN,\tilde{B}_1\tilde{B}_2\PsiN}+N^{-1}\varepsilon^{-2}\lr{\PsiN,\tilde{B}_1^2\PsiN}\,,
\end{eqnarray}
hence \eqref{RDM} yields
\begin{equation}\label{eqn:LLN}
\lim\limits_{N\to\infty}\P_{\PsiN} \left(\left|\frac{1}{N}\sum_{j=1}^N B_j-\lr{\varphi,B\varphi}\right|\geq\varepsilon\right)=0\,.
\end{equation}
The LLN for $\PsiN$ looks formally like the LLN for independent random variables. Let us stress that $\Psi_N^\ideal$ is not the ground state of the ideal gas because $\varphi$ is the minimizer of the Hartree energy functional, which depends on the interactions. In this sense, the interactions have an effect already on the level of the LLN.

\subsection{Central limit theorem for the ground state}\label{subsec:CLT}
Let us first compare the ground state $\PsiNgs$ of the interacting gas with the product state $\Psi_N^\ideal$.
The fluctuations around the respective expectation values are described by the rescaled and centred random variables
\begin{equation}\label{cBN}
\cBN^\ideal:=\frac{1}{\sqrt{N}}\sum_{j=1}^N\left(B_j-\mathbb{E}_{\varphi}[B]\right)\,,\qquad
\cBN=\frac{1}{\sqrt{N}}\sum_{j=1}^N\left(B_j-\mathbb{E}_{\PsiN}[B]\right)\,.
\end{equation}
For the i.i.d.\ situation, the CLT states that the distribution of $\cBN^\ideal$ converges to the centred Gaussian distribution with variance $\sigma_\ideal^2$, i.e.,
\begin{equation}\label{eqn:CLT:ideal}
\lim\limits_{N\to\infty}\left|\P_{\Psi_N^\ideal}(\cBN^\ideal\in A)-\frac{1}{\sqrt{2\pi\sigma_\ideal^2}}\int_A\e^{-\frac{x^2}{2\sigma_\ideal^2}}\dx\right| =0\,.
\end{equation}
By the Berry--Ess\'een theorem, the error in \eqref{eqn:CLT:ideal} is of the order $\mathcal{O}(1/\sqrt{N})$.

Obtaining a comparable statement for the interacting Bose gas has been the content of several works. For our model, one can show along the lines of \cite{rademacher2019_2} that
\begin{equation}\label{eqn:CLT:int}
\lim\limits_{N\to\infty}\left|\P_{\PsiNgs} (\cBN\in A)- \frac{1}{\sqrt{2\pi\sigma^2}}\int_A\e^{-\frac{x^2}{2\sigma^2}}\dx \right|=0
\end{equation} 
for 
\begin{equation}\label{eqn:sigma}
\sigma:=\norm{\nu}\,,\qquad\nu:=U_0qB\varphi +\overline{V_0}\overline{qB\varphi}\,,
\end{equation}
for $U_0$ and $V_0$ from \eqref{eqn:U_0:V_0} and $q$ as in \eqref{p:q}. In general, $\sigma$ and $\sigma_N$ differ by an error of order $\mathcal{O}(1)$. Hence, the interactions have a visible effect on the level of the CLT: they change the variance of the limiting Gaussian random variable.

The simplest way to understand this effect is via the characteristic functions of the random variables $\cBN^\ideal$ and $\cBN$, which are given by
\begin{equation}\label{eqn:char:fctn:ideal}
\phi_N^\ideal(s):=\lr{\varphi^{\otimes N},\e^{\i s\cBN^\ideal}\varphi^{\otimes N}}=\lr{\varphi,\e^{\frac{\i s}{\sqrt{N}} (B-\lr{\varphi,B\varphi})}\varphi}^N
\end{equation}
for the ideal gas, and by
\begin{equation}\label{eqn:char:fctn:int}
\phi_N^\gs(s):=\lr{\PsiNgs,\e^{\i\cBN s}\PsiNgs}
\end{equation}
for the interacting gas. To compute the inner products in \eqref{eqn:char:fctn:ideal} and \eqref{eqn:char:fctn:int}, one applies the map $\UNp$ from \eqref{intro:UNp} to the $N$-body states $\varphi^{\otimes N}$ and $\PsiN$. Since $\varphi^{\otimes N}$ is the pure condensate, $\UNp$ maps $\varphi^{\otimes N}$ onto the vacuum $\vac$ of the excitation Fock space, whereas $\UNp\PsiNgs=\BogUz^*\vac +\mathcal{O}(N^{-1/2})$ (see Lemma \ref{lem:known}). Conjugating $\cBN$ with $\UNp$ and, for the interacting gas case, with $\BogUz$, leads to the identities
\begin{eqnarray}
\phi_N^\ideal(s)&=&\lr{\Omega,\e^{\ad(\i s qB\varphi)-a(\i s qB\varphi)}\Omega}+\mathcal{O}(N^{-\frac12})
\;=\; \e^{-\frac12\norm{qB\varphi}^2s^2}+\mathcal{O}(N^{-\frac12})\,,\\
\phi_N^\gs(s)&=&\lr{\Omega,\BogUz\e^{\ad(\i s qB\varphi)-a(\i s qB\varphi)}\BogUz^*\Omega}+\mathcal{O}(N^{-\frac12})
\;=\; \e^{-\frac12\sigma^2s^2}+\mathcal{O}(N^{-\frac12})\,\label{eqn:phi_N:leading:order}
\end{eqnarray}
(see  Section \ref{subsec:proof:strategy} for the details). 
Since
\begin{equation}
\norm{qB\varphi}^2=\lr{\varphi,B(1-|\varphi\rangle\langle\varphi|)B\varphi}=\sigma_\ideal^2\,,
\end{equation}
the inverse Fourier transform leads to the Gaussian probability densities as in \eqref{eqn:CLT:ideal} and~\eqref{eqn:CLT:int}.

The mathematical derivation of quantum central limit theorems has first been studied in the 1970s in \cite{cushen1971,hepp1973} and was followed by many works in different settings, e.g., \cite{goderis1989,speicher1992, kuperberg2005, hayashi2006,jaksic2010, cramer2010}. For the ground state of an interacting $N$-body system, \eqref{eqn:CLT:int} was proven in \cite{rademacher2019_2} for interactions in the Gross--Pitaevskii regime. For the mean-field Bose gas, the corresponding dynamical problem was first studied in \cite{benarous2013}, where the authors consider the time evolution generated by $\HN$ of an initial product state. This was generalized in \cite{buchholz2014}  to $k$ one-body operators (corresponding to a multivariate setting), in \cite{rademacher2019} to singular interactions, and in \cite{rademacher2021} to $k$-body operators (corresponding to $m$-dependent random variables).

\subsection{No Gaussian central limit theorem for low-energy eigenstates}\label{subsec:no:clt}

So far, we have considered the situation where the interacting Bose gas is in its ground state. If, instead, it is in a low-energy eigenstate $\PsiNex\in\ENmex$ for $\eta>0$, the limiting distribution of the fluctuations is not Gaussian.
For example, if the first excited state $\Psi_N^{(1)}$ is contained in $\ENoex$, it satisfies $\UNp\Psi_N^{(1)}=\BogUz^*\ad(\xi)\vac+\mathcal{O}(N^{-1/2})$ for some normalized $\xi\in\fH$ (see Lemma~\ref{lem:known:expansion:Chi:eta}). In this case, we find that
\begin{equation}\label{eqn:no:CLT:int}
\lim\limits_{N\to\infty}\left|\P_{\Psi_N^{(1)}} (\cBN\in A)- \int_A b^{(1)}_\infty(x)\dx \right|=0\,,
\end{equation} 
where
\begin{equation}\label{eqn:prob:density:one:exc}
b_\infty^{(1)}(x):=\left(1+\frac{|\lr{\xi,\nu}|^2}{\sigma^2}\left(\frac{x^2}{\sigma^2}-1\right)\right)\frac{1}{\sqrt{2\pi\sigma^2}}\,\e^{-\frac{x^2}{2\sigma^2}}
\end{equation}
(see also \cite[Appendix A]{rademacher2019_2}).
The general case with $n$ excitations is treated in Proposition~\ref{thm:no:clt}.

\subsection{Edgeworth expansion for the product state}\label{subsec:edgeworth:ideal:gas}
For the case of i.i.d.\ random variables, one can go beyond the order $N^{-1/2}$ of the CLT and approximate the probability distribution of $\cBN^\ideal$  in an Edgeworth series, i.e., in a power series in powers of $N^{-1/2}$ which is determined by the cumulants of the distribution. We follow the discussion from \cite[Chapter~2]{field_ronchetti}.
The $\l$'th  cumulant of the distribution of $\cBN^\ideal$ is defined as
\begin{equation}\label{eqn:cumulants}
\kappa_\l[\cBN^\ideal]:=(-\i)^\l\left(\tfrac{\d}{\d s}\right)^\l\ln \phi_N^\ideal(s)\Big|_{s=0}
\end{equation}
for $\l\in\N$, and one easily verifies that
\begin{equation}\label{eqn:cumulants:N:dep}
\kappa_\l[\cBN^\ideal] = N^{1-\frac{\l}{2}}\kappa_\l[\tilde{B}]\,,
\end{equation}
where we abbreviated
\begin{equation}
\tilde{B}:=B-\lr{\varphi,B\varphi}\,.
\end{equation}
The first three cumulants coincide with the first three central moments; in particular, 
\begin{equation}\label{eqn:first:cumulants}
\kappa_1[\tilde{B}]=\bE_\varphi[\tilde{B}]=0\,,\qquad\kappa_2[\tilde{B}]=\Var_\varphi[\tilde{B}]=\sigma_\ideal^2\,.
\end{equation}
The basic idea of the Edgeworth series is to expand $\phi_N^\ideal$ around the characteristic function $\exp(- s^2\sigma_\ideal^2/2)$ of the corresponding Gaussian random variable. 
Since the $\l$'th cumulant is the $\l$'th coefficient in the Taylor expansion of $\ln\phi_N^\ideal(s)$ around zero, one (formally) computes with \eqref{eqn:first:cumulants}
\begin{eqnarray}
\phi_N^\ideal(s)&=&\e^{\ln \phi_N^\ideal(s)+\tfrac12 s^2\sigma_\ideal^2 }\,\e^{-\frac12 s^2\sigma_\ideal^2}\nonumber\\
&=&\exp\left\{\sum_{\l\geq 3}N^{-\frac{\l}{2}+1}\frac{\kappa_\l[\tilde{B}](\i s)^\l}{\l!}\right\}\e^{-\frac12 s^2\sigma_\ideal^2}\nonumber\\
&=&\left( 1+ N^{-\frac12}\frac{\kappa_3(\i s)^3}{3!} + N^{-1}\left(\frac{\kappa_4(\i s)^4}{4!}+\frac{\kappa_3^2(\i s)^6}{2\cdot(3!)^2}\right) +\dots \right)\e^{-\frac12 s^2\sigma_\ideal^2}\,,\qquad\label{eqn:expansion:phi_N:ideal}
\end{eqnarray}
where we abbreviated $\kappa_\l:=\kappa_\l[\tilde{B}]$.
Applying the inverse Fourier transform leads to a series expansion for the probability density $b_N^\ideal$ of the random variable $\cBN^\ideal$,
\begin{eqnarray}
b_N^\ideal(x)&=&\bigg(1
+N^{-\frac12}\frac{\kappa_3}{6\,\sigma_\ideal^3}H_3\left(\tfrac{x}{\sigma_\ideal}\right)\nonumber\\
&&\quad+N^{-1}\left(\frac{\kappa_4}{24\,\sigma_\ideal^4}H_4\left(\tfrac{x}{\sigma_\ideal}\right)+\frac{\kappa_3^2}{72\,\sigma_\ideal^6}H_6 \left(\tfrac{x}{\sigma_\ideal}\right) \right) 
+\dots\bigg)\frac{1}{\sqrt{2\pi}\sigma_\ideal}\e^{-\frac{x^2}{2\sigma_\ideal^2}}\,,
\label{eqn:expansion:b_N:ideal}
\end{eqnarray} 
where
\begin{equation}\label{def:Hermite}
H_\l(x):=\e^\frac{x^2}{2}\left(-\tfrac{\d}{\d x}\right)^\l \e^{-\frac{x^2}{2}}
\end{equation}
are the (Chebyshev-)Hermite polynomials, for example
\begin{subequations}
\begin{eqnarray}
H_2(x)&=&x^2-1\,,\\
H_3(x)&=&x^3-3x\,,\\
H_4(x)&=&x^4-6x^2+3\,,\\
H_6(x)&=&x^6-15x^4+45x^2-15\,.
\end{eqnarray}
\end{subequations}
The functions $H_j$ are polynomials of degree $j$ which are even/odd for $j$ even/odd.
The complete (formal) Edgeworth expansion is given by the formula
\begin{equation}\label{eqn:EE:full:formal}
b_N^\ideal(x)
=\left(1+\sum\limits_{\l\geq1}N^{-\frac{\l}{2}}\mathfrak{p}_\l^\ideal(x)\right)\frac{1}{\sqrt{2\pi}\sigma_\ideal}\e^{-\frac{x^2}{2\sigma_\ideal^2}}
\end{equation}
with
\begin{equation}
\mathfrak{p}_\l^\ideal(x)=\sum_{m=1}^\l \frac{H_{\l+2m}\left(\tfrac{x}{\sigma^\ideal}\right)}{\sigma_\ideal^{\l+2m}m!}\sum_{\substack{\bj\in\N^m\\|\bj|=\l}}\prod_{n=1}^m \frac{\kappa_{j_n+2}}{(j_{n}+2)!}\,.
\end{equation}
The $\l$'th Edgeworth polynomial $\mathfrak{p}_\l^\ideal$ is a polynomial of degree $3\l$ which is even/odd for $\l$ even/odd and whose coefficients depend on the cumulants of $\tilde{B}$ of order up to $\l+2$.
If the expansion is truncated after finitely many terms, the right hand side of \eqref{eqn:EE:full:formal} is in general no probability density since it may become negative for large values of $|x|$ and is not necessarily normalized. The Edgeworth expansion is thus a local approximation, which is good in the center of the distribution but can be inaccurate in the tails.

The expansion \eqref{eqn:EE:full:formal} was first formally derived by Chebyshev and Edgeworth in the end of the 20'th century, and the first  proof is due to Cram\'er. Under the assumption that all relevant moments of the distribution exist, the rigorous statement is usually formulated as an asymptotic expansion of the cumulative distribution function  or the probability density, with an error that is uniform in $x$, i.e.,
\begin{equation}
b_N^\ideal(x)
-\left(1+\sum\limits_{\l=1}^aN^{-\frac{\l}{2}}\mathfrak{p}_\l^\ideal(x)\right)\frac{1}{\sqrt{2\pi}\sigma_\ideal}\e^{-\frac{x^2}{2\sigma_\ideal^2}}=
\smallO{N^{-\frac{a}{2}}}
\end{equation}
(see, e.g., \cite{wallace1958,feller,petrov,field_ronchetti,hall} and the references therein).
In general, one cannot take the limit $a\to\infty$ since the series does usually not converge. Generalizations of Edgeworth expansions for i.i.d.\ random variables, for example to different statistics, the multivariate case or the situation when the leading order is not Gaussian, can be found in the literature mentioned above.

\subsection{Edgeworth expansion for the interacting gas}\label{sec_Edgeworth_int}

Let us consider the ground state $\PsiNgs$ of the interacting gas. Due to the dependence of the random variables, this situation is much more intricate than for the product state. In Theorem~\ref{thm:edgeworth}, we prove that the probability density $b_N$ of the random variable $\cBN$ with probability distribution determined by $\PsiN$ is given by
\begin{equation}
b_N(x)=\left(1+\sum_{j=1}^aN^{-\frac{j}{2}}\mathfrak{p}_j(x)\right)\frac{1}{\sqrt{2\pi\sigma^2}}\e^{-\frac{x^2}{2\sigma^2}} + \mathcal{O}(N^{-\frac{a+1}{2}})
\end{equation}
in the weak sense of \eqref{eqn:thm}. 
Let us provide a formal derivation of this result. As a consequence of the interactions, the cumulants
\begin{equation}\label{def:cumulants:int}
\kappa_\l^\gs[\cBN]:=(-\i\tfrac{\d}{\d s})^\l\ln\phi_N^\gs(s)\big|_{s=0}
=(-\i\tfrac{\d}{\d s})^\l\ln\lr{\PsiNgs,\e^{\i\cBN s}\PsiNgs}\big|_{s=0}
\end{equation}
do not have the cumulative property that would lead to the exact scaling behaviour \eqref{eqn:cumulants:N:dep}. Instead, each cumulant $\kappa_\l^\gs[\cBN]$ has a series expansion in powers of $1/N$, for example
\begin{subequations}\label{eqn:expansion:cumulants}
\begin{eqnarray}
\kappa^\gs_2[\cBN] &=& \kappa_{2;0} + N^{-1}\kappa_{2;1} + N^{-2}\kappa_{2;2} + \dots\,,\\
\kappa^\gs_3[\cBN] &=& N^{-\frac12} \kappa_{3;0} + N^{-\frac32}\kappa_{3;1}+ N^{-\frac52}\kappa_{3;2}+\dots\,,\\
\kappa^\gs_4[\cBN] &=& N^{-1}\kappa_{4;0} + N^{-2}\kappa_{4;1} +N^{-3}\kappa_{4;2} +\dots
\end{eqnarray}
\end{subequations}
with
\begin{equation}
\kappa_{2;0}=\sigma^2\,,\qquad 
\kappa_{3;0}=\alpha_3
\end{equation}
for $\sigma$ as in \eqref{eqn:sigma} and $\alpha_3$ as in \eqref{def:alpha}. 
Note that the leading order of $\kappa_\l^\gs[\cBN]$ for $\l=2,3,4$ is $N^{-\l/2+1}$, which is the scaling behaviour of the corresponding cumulant in the i.i.d.\ case. Moreover, only even/odd powers of $N^{-1/2}$ contribute for $\l$ even/odd. 

Proving \eqref{eqn:expansion:cumulants} in full generality for each $\l\geq 2$ would be extremely tedious, which is why we refrain from following that route for a proof of Theorem \ref{thm:edgeworth}. Assuming one could prove the (formal) identity
\begin{equation}
\kappa_\l^\gs[\cBN]=\sum_{\nu\geq 0}N^{-\frac{\l}{2}-\nu+1}\kappa_{\l;\nu} 
\end{equation}
for each $\l\geq 2$, a computation along the lines of \eqref{eqn:expansion:phi_N:ideal} (formally) yields 
\begin{align}
b_N(x)&=\bigg(1
+N^{-\frac12}\frac{\alpha_3}{6\,\sigma^3}H_3\left(\tfrac{x}{\sigma}\right)\nonumber\\
&\quad+N^{-1}\left(\frac{1}{2\sigma}H_2(\tfrac{x}{\sigma}) +
\frac{\kappa_{4;0}}{24\,\sigma^4}H_4\left(\tfrac{x}{\sigma}\right)+\frac{\kappa_{3;0}^2}{72\,\sigma^6}H_6 \left(\tfrac{x}{\sigma}\right) \right) 
+\dots\bigg)\frac{1}{\sqrt{2\pi}\sigma}\e^{-\frac{x^2}{2\sigma^2}}\,,
\label{eqn:expansion:b_N}
\end{align} 
which is consistent with the rigorous result obtained  in Theorem~\ref{thm:edgeworth}.

\section{Proofs}\label{sec:proofs}
\subsection{Preliminaries}

\subsubsection{Weyl operators}\label{subsec:pre:weyl}

As a preparation, we recall in this section the concept of Weyl operators (see, e.g., \cite{rodnianski2009}) and collect some of their well-known properties.
For any $f\in \fH$, the Weyl operator is defined as
\begin{equation}
W(f) := \e^{\ad(f) - a(f)}.
\end{equation}
It is unitary with $W^*(f) = W(-f)$ and satisfies the shift property
\begin{equation}\label{eqn:Weyl}
W^*(f) a(g) W(f) = a(g) + \lr{g,f}\,,\qquad W^*(f) \ad(g) W(f) = \ad(g) + \lr{f,g}
\end{equation}
for all $f,g \in \fH$. 
Conjugation with a Bogoliubov transformation $\BogU$, $\BogV=\left(\begin{smallmatrix}U & \overline{V}\\ V & \overline{U}\end{smallmatrix}\right)$, transforms a Weyl operator into another Weyl operator as
\begin{equation}\label{Bog_of_Weyl}
\BogU W(f) \BogU^* = W(g)\,,\qquad g:=Uf-\overline{Vf}\,.
\end{equation}
The Baker--Campbell--Haussdorff formula  yields
\begin{equation}\label{BCH}
W(f) = \e^{\ad(f)} \e^{-a(f)} \e^{-\frac12 \norm{f}^2}\,,
\end{equation}
which leads to the identity
\begin{equation}\label{Weyl_in_vacuum}
\lr{\Omega, W(f) \Omega} =\e^{-\frac12 \norm{f}^2}\,.
\end{equation}
The number operator transforms under a Weyl operator as
\begin{equation}
\Np W(f) = W(f) \left( \Np + \ad(f)+a(f) + \norm{f}^2 \right)\,,
\end{equation}
which leads to the following result:

\begin{lem}\label{lem:aux:W}
Let $b\in \frac{1}{2}\N_0$ and $f\in \fH$. Then there exists a constant $C(b)$ such that
\begin{equation}\label{Weyl_number_op}
\norm{(\Np+1)^bW(f)\bxi}\leq C(b) \left( \norm{(\Np+1)^b \bxi} + \norm{f}^{2b}\norm{\bxi}\right)
\end{equation}
for any $\bxi\in\Fock$.
\end{lem}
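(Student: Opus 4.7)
The plan is to transfer $W(f)$ to a self-adjoint operator via unitarity and then control the resulting powers through operator inequalities. Define
\[
T := W^*(f)(\Np+1)W(f) = \Np + 1 + \phi(f) + \|f\|^2,\qquad \phi(f):=\ad(f)+a(f);
\]
the second equality is immediate from the transformation rule for $\Np W(f)$ displayed just above the lemma. Since $W(f)$ is unitary, functional calculus gives $W^*(f)(\Np+1)^b W(f) = T^b$ for every $b \geq 0$, so
\[
\|(\Np+1)^b W(f)\bxi\| \;=\; \|T^b\bxi\|,
\]
reducing the claim to bounding $\|T^b\bxi\|$ by $C(b)(\|(\Np+1)^b\bxi\|+\|f\|^{2b}\|\bxi\|)$.

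The next step is to set up the basic operator inequalities. Combining the standard estimate $\|a^\sharp(f)\psi\| \leq \|f\|\|(\Np+1)^{1/2}\psi\|$ with AM-GM yields $\pm\phi(f) \leq (\Np+1)+\|f\|^2$, and therefore $0 \leq T \leq 2((\Np+1)+\|f\|^2)$. For $b \in \{0, \tfrac12\}$ this is already enough: operator monotonicity of $x^{2b}$ on $[0,\infty)$ and subadditivity of the same concave power (both valid because $2b \leq 1$) produce $T^{2b} \leq 2^{2b}((\Np+1)^{2b}+\|f\|^{4b})$, from which the claim follows by taking $\langle\bxi,\cdot\,\bxi\rangle$ and a square root.

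For larger $b$, the strategy is to upgrade the inequality to
\[
T^m \leq C(m)\bigl((\Np+1)^m + \|f\|^{2m}\bigr)\qquad\text{for every } m \in \N_0;
\]
the claim for $b\in\tfrac12\N_0$ then follows with $m=2b$ via $\|T^b\bxi\|^2 = \langle\bxi, T^{2b}\bxi\rangle$. The case of even $m=2n$ is equivalent (as a quadratic form inequality) to the norm estimate $\|T^n\bxi\| \leq C(n)(\|(\Np+1)^n\bxi\|+\|f\|^{2n}\|\bxi\|)$, which will be proved by induction on $n$; the step writes $T^{n+1}\bxi = T(T^n\bxi)$, applies the $n=1$ estimate, and then commutes $\Np+1$ across $T^n$ using $[\Np,\phi(f)] = \ad(f)-a(f)$ (itself bounded by $\|f\|(\Np+1)^{1/2}$). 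The resulting correction terms will be absorbed via the interpolation inequalities
\[
\|f\|^p\|(\Np+1)^q\bxi\| \leq \|(\Np+1)^n\bxi\|+\|f\|^{2n}\|\bxi\|,\qquad p+2q = 2n,\ 0\leq q \leq n,
\]
which in turn follow from the spectral theorem for $\Np+1$ and Young's inequality. For odd $m=2n+1$, the bound will be deduced from its even version at $m=2n+2$ by operator monotonicity of $x^{(2n+1)/(2n+2)}$ and subadditivity of this concave power, exploiting that $(\Np+1)^{2n+2}$ and $\|f\|^{4n+4}$ commute.

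The main technical obstacle will be the induction step for even $m$: the non-commutativity $[\Np, \phi(f)] \neq 0$ forces careful bookkeeping of correction terms at each level. The redeeming feature is that every commutator strictly lowers the effective degree in $T$ of the word being estimated, so a degree count will ensure the induction closes with constants depending only on $n$.
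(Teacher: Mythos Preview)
Your approach is essentially the same as the paper's: both reduce via unitarity to $\|(\Np+1)^bW(f)\bxi\|=\|T^b\bxi\|$ with $T=\Np+1+\phi(f)+\|f\|^2$, and then control $T^b$ using the basic bound $\|a^\sharp(f)\psi\|\leq\|f\|\|(\Np+1)^{1/2}\psi\|$. The paper simply asserts the final inequality in one line, whereas you spell out a careful scheme (induction for even powers, operator monotonicity for odd/half-integer powers, and Young-type interpolation) that actually justifies that line; your level of detail is appropriate and the argument is correct.
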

\begin{proof}
By unitarity of the Weyl operator,
\begin{eqnarray}
\norm{(\Np+1)^bW(f)\bxi}
&=&\norm{(\Np+1+\ad(f)+a(f)+\norm{f}^2)^b\bxi}\nonumber\\
&\leq& C(b) \left( \norm{(\Np+1)^b \bxi} + \norm{f}^{2b}\norm{\bxi}\right)
\end{eqnarray}
where we used the estimate $\norm{a^\sharp(f)\bxi}\leq\norm{f}\norm{(\Np+1)^\frac12\bxi}$ for $a^\sharp\in\{\ad,a\}$.
\end{proof}

\subsection{Strategy of proof}\label{subsec:proof:strategy}
In this section, we give an overview of the proof of our main result, Theorem~\ref{thm:edgeworth}. 
We will in the following always assume that Assumptions \ref{ass:V} to \ref{ass:cond} are satisfied and that $\PsiN\in\ENmex$ for some $\eta\in\N_0$ (see Definition \ref{def:ENmex}). Moreover, we will use the notation
$\Chi=\UNp\Psi$
and denote by $\Chi_n$ the coefficients of the asymptotic expansion \eqref{eqn:expansion:Chi:spectrum:eta}. As above, we will only indicate the dependence on $\eta$ in the notation where it is inevitable.
Our goal is to compute the quantity
\begin{equation}\label{eqn:E[g]}
\mathbb{E}_\PsiN[g(\cBN)]
=\lr{\PsiN,g(\cBN)\PsiN}
=\int_\R\ds\,\hat{g}(s)\phi_N(s)
\end{equation}
with
\begin{equation}\label{phi_N}
\phi_N(s)=\lr{\PsiN,\e^{\i s\cBN}\PsiN}
\end{equation}
for $\cBN$ as in \eqref{cBN} and $g:\R\to\C$ some integrable and sufficiently regular function.
As a first step, we use the excitation map $\UNp$ from \eqref{intro:UNp} to re-write the characteristic function as
\begin{equation}\label{eqn:E:FT:Chi}
\phi_N(s)
=\lr{\Chi,\e^{\i s\FockB}\Chi}\,,
\end{equation} 
where $\FockB$ denotes the operator on $\Fock$ defined by
\begin{equation}
\FockB:=\UNp\cBN\UNp^*\,.
\end{equation}
Applying the substitution rules \eqref{eqn:substitution:rules} and expanding the square roots $\sqrt{1-\Np/N}$ in $N^{-1}$ leads to the following asymptotic expansion (see Section \ref{subsec:proof:lem:B} for the proof):

\begin{lem}\label{lem:B}
We have
\begin{equation}\label{eqn:B:series}
\FockB=\sum_{\l=0}^a N^{-\frac{\l}{2}}\FockB_\l+N^{-\frac{a+1}{2}}\FockR_a\,,
\end{equation}
where
\begin{subequations}\label{eqn:def:FockB_l}
\begin{eqnarray}
\FockB_1&=&\d\Gamma(q\tilde{B}q)-B^{(1)}\,,\\
\FockB_{2\l}&=&c_{\l}\left(\ad(qB\varphi)\Np^{\l}+\Np^{\l}a(qB\varphi)\right)\qquad (\l\geq0)\,,\\
\FockB_{2\l+1}&=&-B^{(\l+1)}\qquad (\l\geq 1)
\end{eqnarray}
\end{subequations}
for $B^{(\l)}$ as in \eqref{eqn:B^(l)}, with $c_0=1$, $c_1=-1/2$, $c_\l=-(2\l-3)!!/(2^\l\l!)$ ($\l\geq 2$) and with 
\begin{equation}\label{eqn:estimate:R_a}
\norm{\FockR_a\bxi}\leq\onorm{B} C(a) \left( \norm{(\Np+1)^{a+1}\bxi} + \delta_{a,0} N^{-1/2}\norm{(\Np+1)^{3/2}\bxi} \right)
\end{equation}
for some constant $C(a)>0$ and any $\bxi\in\Fock$.
\end{lem}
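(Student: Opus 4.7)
The plan is to second-quantize $\sum_j B_j$, decompose via the projections $p = |\varphi\rangle\langle\varphi|$ and $q = \id - p$ on $\fH$, apply the substitution rules \eqref{eqn:substitution:rules} to transfer to the excitation Fock space, and then expand both the resulting $\sqrt{N-\Np}$ factors and the asymptotic series for $\bE_\PsiN[B]$ from Lemma~\ref{lem:known:expansion:B} in powers of $N^{-1/2}$. Writing $B = pBp + pBq + qBp + qBq$ and using $a^\sharp(pf) = \lr{\varphi,f}a^\sharp(\varphi)$, I obtain
\[
\sum_{j=1}^N B_j = \lr{\varphi,B\varphi}\ad(\varphi)a(\varphi) + \ad(qB\varphi)a(\varphi) + \ad(\varphi)a(qB\varphi) + \d\Gamma(qBq),
\]
so that \eqref{eqn:substitution:rules} yields
\[
\UNp\Big(\sum_{j=1}^N B_j\Big)\UNp^* = \lr{\varphi,B\varphi}(N-\Np) + \ad(qB\varphi)\sqrt{N-\Np} + \sqrt{N-\Np}\,a(qB\varphi) + \d\Gamma(qBq).
\]
Subtracting $N\bE_\PsiN[B] = NB^{(0)} + B^{(1)} + N^{-1}B^{(2)} + \cdots$, the leading $\lr{\varphi,B\varphi}N$ cancels the $NB^{(0)}$, and combining $-\lr{\varphi,B\varphi}\Np$ with $\d\Gamma(qBq)$ produces $\d\Gamma(q\tilde B q)$ since $q$ is a projection. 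Dividing by $\sqrt{N}$, the scalar corrections $-N^{-\l}B^{(\l)}$ yield the $-B^{(1)}$ piece of $\FockB_1$ and the $\FockB_{2\l-1} = -B^{(\l)}$ for $\l\geq 2$, whose tail is bounded by $C(a)\onorm{B}N^{-(a+1)/2}$ directly from Lemma~\ref{lem:known:expansion:B}.

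For the operator piece I would expand $\sqrt{1-\Np/N} = \sum_{k\geq 0}c_k(\Np/N)^k$ with $c_k = (-1)^k\binom{1/2}{k}$, which match the coefficients in the statement, thereby identifying the even-order $\FockB_{2k}$. The remainder is controlled via the uniform estimate
\[
\Big|\sqrt{1-x} - \sum_{k=0}^K c_k x^k\Big| \leq C_K\,x^{K+1}\quad\text{for all }x\in[0,1],
\]
which on $[0,1/2]$ is the standard Taylor remainder and on $[1/2,1]$ follows from $x^{K+1}\geq 2^{-(K+1)}$ together with the uniform boundedness of both sides (using $|c_k|\lesssim k^{-3/2}$ to ensure summability of the partial sums). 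Applied on $\FNp$, where $\Np/N \leq 1$, with $K = \lfloor a/2 \rfloor$, and combined with $\norm{a(f)\bxi} \leq \norm{f}\norm{(\Np+1)^{1/2}\bxi}$ and $[\Np,a(f)] = -a(f)$, this yields
\[
\Big\|\Big(\sqrt{1-\Np/N} - \sum_{k=0}^K c_k\frac{\Np^k}{N^k}\Big)a(qB\varphi)\bxi\Big\| \leq C(a)\onorm{B}\,N^{-K-1}\,\norm{(\Np+1)^{K+3/2}\bxi}
\]
and the analogous bound for $\ad(qB\varphi)$. A parity case-distinction then shows that $N^{(a+1)/2}$ times this bound is $\leq C(a)\onorm{B}\norm{(\Np+1)^{a+1}\bxi}$ for $a\geq 1$, while at $a = 0$ the residual factor $N^{-1/2}\norm{(\Np+1)^{3/2}\bxi}$ cannot be absorbed into $\norm{(\Np+1)\bxi}$ and therefore yields the $\delta_{a,0}$ correction in \eqref{eqn:estimate:R_a}.

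The main obstacle is bookkeeping: even-order terms come solely from the binomial expansion, odd-order terms with $\l\geq 3$ solely from the scalar series, and the two sources collide at $\l = 1$ where they jointly assemble $\FockB_1 = \d\Gamma(q\tilde B q) - B^{(1)}$. The only non-trivial analytical point is the uniform Taylor estimate up to the endpoint $x = 1$, where $\sqrt{1-x}$ loses smoothness; the two-regime argument above is what produces the sharp scaling $N^{-K-1}(\Np+1)^{K+3/2}$ needed to match \eqref{eqn:estimate:R_a}.
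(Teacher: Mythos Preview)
Your proposal is correct and follows essentially the same route as the paper's proof: decompose $\d\Gamma(\tilde B)$ via $p,q$, apply the substitution rules \eqref{eqn:substitution:rules}, Taylor-expand $\sqrt{1-\Np/N}$, and combine with the scalar expansion of $\lr{\PsiN,B_1\PsiN}$ from Lemma~\ref{lem:known:expansion:B}. The only difference is that you supply an explicit two-regime argument for the uniform bound $|\sqrt{1-x}-\sum_{k\leq K}c_k x^k|\leq C_K x^{K+1}$ on $[0,1]$, which the paper simply asserts.
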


Note that the estimate \eqref{eqn:estimate:R_a} is by far not optimal in the powers of $\Np$ except for $a=0$, which determines the largest power of $s$ in Proposition~\ref{prop:S}.
In combination with Duhamel's formula,
\begin{equation}\label{eqn:Duhamel:B:B_0}
\e^{\i s\FockB}
=\e^{\i s\FockB_0} + \i \int\limits_0^s\d\tau\e^{\i\tau\FockB}
\left(\FockB-\FockB_0\right)
\e^{\i(s-\tau)\FockB_0}\,,
\end{equation}
Lemma \ref{lem:B} leads to an expansion of $\e^{\i s\FockB}$. Together with the asymptotic series \eqref{eqn:expansion:Chi:spectrum:eta} for $\Chi$, this yields the following expansion of \eqref{eqn:E:FT:Chi}, which is proven in Section \ref{subsec:proof:prop:decomposition}:

\begin{proposition}\label{prop:decomposition}
For $\phi_N$ as defined in \eqref{phi_N}, it holds that
\begin{equation}\label{eqn:prop:decomposition}
\left|\phi_N(s)
-\sum_{j=0}^aN^{-\frac{j}{2}}\sum_{m=0}^j\sum_{n=0}^m\lr{\Chi_n,\FockT_{j-m}(s)\Chi_{m-n}}\right|
\leq N^{-\frac{a+1}{2}}\left(C(a) + \left|\cS_a(s)\right|\right)\,,
\end{equation}
where
\begin{subequations}\label{def:TSIJ}
\begin{eqnarray}
\FockT_0(s)&:=&\e^{\i s\FockB_0}\,,\\
\FockT_j(s)&:=&\sum_{k=1}^j\sum_{\substack{\bl\in\mathbb{N}^k\\|\boldsymbol{\l}|=j}}\FockI_\bl^{(k)} \qquad (j\geq1)\,,\\
\cS_a(s) &:=& \sum_{m=0}^a\sum_{n=0}^{a-m}\sum_{k=1}^{m+1}\sum_{\substack{\bl\in\N^{k-1}\\|\bl|\leq m}}\lr{\Chi_{a-m-n},\FockJ^{(k)}_{m;\bl}(s)\Chi_n} \label{S_a_S}
\end{eqnarray}
\end{subequations}
\begin{subequations}\label{def:I:J}
with
\begin{eqnarray}
\FockI_{\bl}^{(k)}(s)
&:=&\int^s_{\Delta_k}\d\btau \e^{\i\tau_k\FockB_0}\FockB_{\l_k}\e^{\i(\tau_{k-1}-\tau_k)\FockB_0}\FockB_{\l_{k-1}}\e^{\i(\tau_{k-2}-\tau_{k-1})}\mycdots\FockB_{\l_1}\e^{\i(s-\tau_1)\FockB_0}
\,,\\
\FockJ^{(k+1)}_{a;\bl}(s)
&:=& \int_{\Delta_{k+1}}^s\d\btau\e^{\i\tau_{k+1}\FockB}\FockR_{a-|\bl|}\e^{\i(\tau_k-\tau_{k+1})\FockB_0}\FockB_{\l_k}\e^{\i(\tau_{k-1}-\tau_k)\FockB_0}\FockB_{\l_{k-1}}
\mycdots \FockB_{\l_1}\e^{\i(s-\tau_1)\FockB_0}\qquad
\end{eqnarray}
\end{subequations}
for $\bl=(\l_1\mydots \l_k)\in\N^k$, $\FockB_\l$ and $\FockR_\l$ as in Lemma \ref{lem:B}, and where we used the notation
\begin{equation}
\int_{\Delta_j}^s\d\boldsymbol{\tau}:=\i^j\int\limits_0^s\d\tau_1\int\limits_0^{\tau_1}\d\tau_2\cdots\int\limits_0^{\tau_{j-1}}\d\tau_j \,.
\end{equation}
\end{proposition}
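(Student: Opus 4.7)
My plan is to expand the excitation vector $\Chi$ via Lemma~\ref{lem:known} and the propagator $\e^{\i s\FockB}$ via iterated Duhamel's formula in powers of $N^{-1/2}$, and to match the resulting terms against the structure of \eqref{eqn:prop:decomposition}.

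First I use Lemma~\ref{lem:known:expansion:Chi:eta} to write $\Chi=\sum_{n=0}^a N^{-n/2}\Chi_n+N^{-(a+1)/2}R$ with $\norm{R}\leq C(a)$. Inserting this in both slots of $\phi_N(s)=\lr{\Chi,\e^{\i s\FockB}\Chi}$ and using unitarity of $\e^{\i s\FockB}$ together with the moment bound $\norm{\Chi_n}\leq C(n)$ from Lemma~\ref{lem:known:moments:Chil}, all cross-terms involving $R$, as well as the terms with $n_1,n_2\leq a$ but $n_1+n_2>a$, contribute only $O(N^{-(a+1)/2})$, so that
\begin{equation*}
\phi_N(s)=\sum_{n_1+n_2\leq a}N^{-\frac{n_1+n_2}{2}}\lr{\Chi_{n_1},\e^{\i s\FockB}\Chi_{n_2}}+O\bigl(N^{-(a+1)/2}\bigr).
\end{equation*}

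Next, for each pair $(n_1,n_2)$ with $n_1+n_2\leq a$ I set $m:=a-n_1-n_2$ and expand $\e^{\i s\FockB}$ by iterating the Duhamel identity \eqref{eqn:Duhamel:B:B_0}. At every iteration step I substitute $\FockB-\FockB_0=\sum_{\ell=1}^b N^{-\ell/2}\FockB_\ell+N^{-(b+1)/2}\FockR_b$ from Lemma~\ref{lem:B}, taken with the running budget $b:=m-|\bl|$, where $\bl$ is the tuple of indices collected so far. Each $\FockB_\ell$-summand continues the iteration (strictly decreasing $b$), while each $\FockR_b$-summand terminates it, with an accumulated prefactor $N^{-|\bl|/2}\cdot N^{-(m-|\bl|+1)/2}=N^{-(m+1)/2}$ of the desired error order. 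Since $b=0$ forces the choice $\FockR_0$, the iteration has finite depth. A tree-expansion argument then shows that the "all-$\FockB$" terminal leaves with $|\bl|=j$ combine into $\FockT_j(s)$ from \eqref{def:TSIJ}, while each $\FockR$-terminated leaf with $\bl\in\N^{k-1}$, $|\bl|\leq m$, gives precisely $N^{-(m+1)/2}\FockJ^{(k)}_{m;\bl}(s)$ from \eqref{def:I:J}. Thus
\begin{equation*}
\e^{\i s\FockB}=\sum_{j=0}^m N^{-\frac{j}{2}}\FockT_j(s)+N^{-\frac{m+1}{2}}\sum_{k=1}^{m+1}\sum_{\substack{\bl\in\N^{k-1}\\|\bl|\leq m}}\FockJ^{(k)}_{m;\bl}(s).
\end{equation*}

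Inserting this into the sum of the first step and re-indexing via $m'=n_1+n_2$, $n=n_1$, $j'=m'+j$, the $\FockT$-contributions collapse into the main triple sum of \eqref{eqn:prop:decomposition}; the $\FockJ$-contributions, summed over $(n_1,n_2)$ with $n_1+n_2=a-m$ and using $\Chi_{n_1}=\Chi_{a-m-n}$, $\Chi_{n_2}=\Chi_n$, combine to exactly $N^{-(a+1)/2}\cS_a(s)$ from \eqref{S_a_S}. The main technical effort lies in the bookkeeping of the Duhamel iteration: one has to verify that the variable-budget iteration produces remainders with $\FockR_{m-|\bl|}$ (matching the definition of $\FockJ^{(k)}_{m;\bl}$) and that the resulting tree sums over $k$ and $\bl$ line up, without double-counting, with those in $\cS_a$. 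The adaptive (rather than uniform) choice of the budget $b$ is what makes the right-hand side of \eqref{eqn:prop:decomposition} useful in the sequel: Lemma~\ref{lem:B} bounds $\FockR_b$ by $(\Np+1)^{b+1}$, so a smaller $b$ yields a weaker $\Np$-weight, which in combination with Lemma~\ref{lem:known:moments:Chil} ultimately controls the polynomial growth of $\cS_a(s)$ in $s$.
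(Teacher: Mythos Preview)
Your proposal is correct and follows essentially the same strategy as the paper: expand $\Chi$ via Lemma~\ref{lem:known} and $\e^{\i s\FockB}$ via iterated Duhamel with the decomposition of Lemma~\ref{lem:B}, then match indices. The only cosmetic difference is the order of operations---the paper first displays the Duhamel expansion of $\e^{\i s\FockB}$ (to depth $a$) and afterwards inserts the expansion of $\Chi$, whereas you expand $\Chi$ first and then run Duhamel with the pair-dependent budget $m=a-n_1-n_2$; your version makes more transparent why the remainder operators $\FockR_{m-|\bl|}$ in $\FockJ^{(k)}_{m;\bl}$ carry the adaptive index $m$ rather than the fixed $a$, which is exactly what the structure of $\cS_a(s)$ requires.
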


To control the remainders of the expansion, it is crucial that 
$\FockB_0=\ad(qB\varphi)+a(qB\varphi)$, 
hence 
\begin{equation}
\e^{\i \tau \FockB_0} = W(\i\tau qB\varphi)
\end{equation}
is a Weyl operator. Moreover, the operators $\FockR_\l$ and $\FockB_\l$ can be bounded by powers of the number operator. Hence, applying Lemma \ref{lem:aux:W} repeatedly and making use of the fact that moments of the number operator with respect to $\Chi_\l$ are bounded uniformly in $N$ (Lemma~\ref{lem:known:moments:Chil}) yields an estimate of the error $\cS_a(s)$ (see Section \ref{subsec:proof:prop:S} for a proof):

\begin{proposition}\label{prop:S}
The term $\cS_a(s)$ from \eqref{S_a_S} satisfies
\begin{equation}
|\cS_a(s)|
\leq C_B(a)\left(1+|s|^{3a+3}+N^{-\frac12}|s|^{3a+4}\right)
\end{equation}
where $C_B(a)\leq C(a) (1+\onorm{B}^{3a+4})$ for some constant $C(a)$.
\end{proposition}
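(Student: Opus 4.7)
The plan is to bound each term $|\lr{\Chi_{a-m-n},\FockJ^{(k)}_{m;\bl}(s)\Chi_n}|$ in $\cS_a(s)$ by Cauchy--Schwarz and unitarity of $e^{\i\tau_k\FockB}$, and then to propagate a number-operator weight $(\Np+1)^b$ from left to right through the remaining operator string
\[
\FockR_{m-|\bl|}\,e^{\i(\tau_{k-1}-\tau_k)\FockB_0}\,\FockB_{\l_{k-1}}\,e^{\i(\tau_{k-2}-\tau_{k-1})\FockB_0}\cdots \FockB_{\l_1}\,e^{\i(s-\tau_1)\FockB_0}\,\Chi_n.
\]
The ingredients are all prepared: Lemma~\ref{lem:B} controls $\FockR_{m-|\bl|}$ by the initial weight $(\Np+1)^{m-|\bl|+1}$, plus an $N^{-1/2}$ subleading term with weight $3/2$ when $m=|\bl|$; from the explicit form of $\FockB_\l$ in the same lemma one obtains $\norm{(\Np+1)^b\FockB_{\l_j}\phi}\leq C(b)\onorm{B}\norm{(\Np+1)^{b+c(\l_j)}\phi}$ with $c(\l_j)\leq\l_j$ for every $\l_j\geq 1$ (explicitly $c(1)=1$, $c(2\l)=\l+\tfrac12$, $c(2\l+1)=0$ for $\l\geq 1$); the identification $e^{\i\tau\FockB_0}=W(\i\tau qB\varphi)$ together with $\norm{\i\tau qB\varphi}\leq|\tau|\onorm{B}$ reduces each of the $k-1$ Weyl traversals to Lemma~\ref{lem:aux:W}; and Lemma~\ref{lem:known:moments:Chil} bounds any $\norm{(\Np+1)^b\Chi_n}$ uniformly in $N$.

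At each Weyl, Lemma~\ref{lem:aux:W} splits the bound into two branches: a ``keep'' branch that preserves the weight, and a ``convert'' branch that discharges it in exchange for a factor $C(b)|\tau|^{2b}\onorm{B}^{2b}$. Unfolding the $k-1$ Weyls produces $2^{k-1}$ paths. The key observation is that in every path the total $s$-weight discharged is at most
\[
b_{\max}\;:=\;(m-|\bl|+1)+\sum_{j=1}^{k-1}c(\l_j)\;\leq\;m+1,
\]
because after a ``convert'' at a Weyl the weight propagation restarts only with what is still required for the remaining $\FockB_{\l_j}$'s to the right, so the discharges telescope to $b_{\max}$. Any residual weight is deposited on $\Chi_n$ and controlled by Lemma~\ref{lem:known:moments:Chil}. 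Combined with the simplex estimate $|\Delta_k|\leq|s|^k/k!$ and $k\leq m+1$, each path contributes at most $|s|^{k+2b_{\max}}\leq|s|^{3m+3}\leq|s|^{3a+3}$, with operator-norm accumulation at most $\onorm{B}^{3a+4}$. The subleading $N^{-1/2}$ piece of $\FockR_0$ starts the propagation at weight $3/2$ and yields instead $|s|^{3m+4}\leq|s|^{3a+4}$. Summing over the polynomially many indices $m,n,k,\bl$ and the $2^{k-1}\leq 2^a$ branches is absorbed into $C(a)$.

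The main obstacle is the sharp weight bookkeeping. A naive collapse of Lemma~\ref{lem:aux:W} into a single estimate $\norm{(\Np+1)^b W(f)\bxi}\leq C(b)(1+\norm{f}^{2b})\norm{(\Np+1)^b\bxi}$ via $\norm{\bxi}\leq\norm{(\Np+1)^b\bxi}$ propagates the full accumulated weight through every Weyl; the resulting product $\prod_i(1+|s|^{2b_i}\onorm{B}^{2b_i})$, with $b_i$ the weight at the $i$-th Weyl, has maximal $s$-power $\sum_i 2b_i$, which grows quadratically in $a$. Obtaining the sharp exponent $3a+3$ therefore requires treating the two branches of Lemma~\ref{lem:aux:W} separately and exploiting the telescoping of the ``convert'' discharges to $b_{\max}$.
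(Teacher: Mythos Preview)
Your proposal is correct and follows essentially the same approach as the paper: after Cauchy--Schwarz and unitarity remove $e^{\i\tau_k\FockB}$, one propagates a number-operator weight through the alternating string of Weyls and $\FockB_{\l_j}$'s via Lemma~\ref{lem:aux:W} and the bounds on $\FockR$, $\FockB_\l$, and the sharp point is that the total $s$-exponent from Weyl passages is $2(b+\Gamma_\bl)$ with $\Gamma_\bl=\sum_j c(\l_j)\le|\bl|$ (your $b_{\max}$ is exactly the paper's $b+\Gamma_\bl$). The paper packages the iteration as a two-term recursion of the shape $\norm{(\Np+1)^{b'}\,\cdot\,}+(s\onorm{B})^{2b'}\norm{\,\cdot\,}$, which is equivalent to your branch-and-telescope description; one minor slip is that there are $k$ Weyl factors in the string rather than $k-1$, but this does not affect your bounds.
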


The next step is to compute the coefficients in the expansion \eqref{eqn:prop:decomposition}, which is done in Section~\ref{subsec:proof:prop:computation}. Since an explicit evaluation to any order is too complex to obtain in full generality, we focus on the dependence of the coefficients on $s$:

\begin{proposition}\label{prop:computation}
For $\FockT_j$ as in Proposition \ref{prop:decomposition}, we have
\begin{equation}\label{eqn:prop:computation:eta}
\sum_{m=0}^j\sum_{n=0}^m\lr{\Chi_n,\FockT_{j-m}(s)\Chi_{m-n}}
= p_j^{(\eta)}(s)\e^{-\frac12 s^2\sigma^2}\,
\end{equation}
for $\sigma$ as in \eqref{eqn:sigma} and 
where
$p_j^{(\eta)}$ is a polynomial of degree $3j+2\eta$ with complex coefficients depending on  $\varphi$, $B$, $\BogVz$ and $\Theta^{(\eta)}_{\l,j}$. Moreover, $p_j^{(\eta)}$ is even/odd  for $j$ even/odd. 
\end{proposition}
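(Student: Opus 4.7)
The plan is to conjugate everything by the diagonalizing Bogoliubov transformation $\BogUz$ and exploit the fact that $\e^{\i\alpha\FockB_0}$ is a Weyl operator. By Lemma~\ref{lem:known:expansion:Chi:eta}, write $\Chi_\l = \BogUz^*\Phi_\l^{(\eta)}$, where $\Phi_\l^{(\eta)}$ is a linear combination of products of $\ad$'s applied to $\vac$, containing at most $3\l+\eta$ creators with fixed parity $\l+\eta\pmod 2$. Inserting $\BogUz^*\BogUz$ between all factors of $\FockT_{j-m}(s)$ and using \eqref{Bog_of_Weyl} together with the antilinearity of $a$, a short computation gives
\begin{equation*}
\BogUz\, \e^{\i\alpha\FockB_0}\, \BogUz^* \;=\; W(\i\alpha\nu)\,,\qquad \nu = U_0\, qB\varphi + \overline{V_0\, qB\varphi}\,,
\end{equation*}
so each exponential appearing in the definition \eqref{def:I:J} of $\FockI^{(k)}_\bl(s)$ becomes a Weyl operator along the fixed direction $\nu$. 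Meanwhile the conjugated operators $\widetilde\FockB_\l := \BogUz\FockB_\l\BogUz^*$ are, by \eqref{eqn:trafo:ax}, finite polynomials in $a,\ad$ plus constants.

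The crucial observation is that all these Weyl operators $W(\i\alpha\nu)$ commute with one another, since their arguments are proportional. Using the shift property \eqref{eqn:Weyl} to move each Weyl operator rightward past the intermediate $\widetilde\FockB_{\l_i}$'s picks up scalar shifts linear in the respective time variable, and after all passes the Weyl operators telescope into a single factor $W(\i s\nu)$ on the right. The integrand of $\FockI^{(k)}_\bl(s)$ thereby takes the form of an operator-valued polynomial in $s,\btau$ times $W(\i s\nu)$. Moving $W(\i s\nu)$ further through the creator-polynomial $\Phi_{m-n}^{(\eta)}$ produces additional shifts linear in $s$; applying \eqref{BCH} and \eqref{Weyl_in_vacuum} then yields $W(\i s\nu)\vac = \e^{-\frac12 s^2\sigma^2}\e^{\ad(\i s\nu)}\vac$. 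Computing the resulting vacuum expectation via Wick's rule gives a finite polynomial in $s$, and the simplex integration $\int_{\Delta_k}^s\d\btau$ of a polynomial in $\btau$ returns another polynomial in $s$. Summing the finitely many contributions yields the claimed form $p_j^{(\eta)}(s)\,\e^{-\frac12 s^2\sigma^2}$.

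The remaining degree and parity bookkeeping is the most delicate point. For the degree bound $3j+2\eta$, one accounts for the $s$-contributions from (a) the at most $3n+\eta$ and $3(m-n)+\eta$ creators in $\Phi_n^{(\eta)}$ and $\Phi_{m-n}^{(\eta)}$ respectively, (b) the $a^\sharp$-count of each $\widetilde\FockB_{\l_i}$, which is at most $2\l_i+1$ by the explicit form \eqref{eqn:def:FockB_l}, and (c) the $k\leq j-m$ simplex integrations, each of which can contribute at most one extra power of $s$; combined with $\sum_i\l_i = j-m$, these add up to at most $3j+2\eta$. For the parity assertion, the selection rule $j\equiv\l+\eta\pmod 2$ in \eqref{eqn:def:Theta:spectrum:eta}, together with the fixed $a^\sharp$-parity of each $\FockB_\l$ visible in \eqref{eqn:def:FockB_l} and the parity-preserving nature of Wick contractions, ensures that only monomials of parity $j$ in $s$ survive. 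This combinatorial bookkeeping is the main obstacle; all other steps reduce to mechanical applications of the Weyl and Bogoliubov identities collected in Sections~\ref{subsec:pre:weyl} and~\ref{subsec:framework:Bog}.
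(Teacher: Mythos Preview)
Your strategy is the same as the paper's: conjugate by $\BogUz$, telescope the Weyl factors into a single $W(\i s\nu)$, pull out $\e^{-\frac12 s^2\sigma^2}$ via \eqref{BCH}, and reduce to a finite vacuum polynomial. The parity argument you sketch is also essentially what the paper does (it formalises the bookkeeping through equivalence classes $\FockF_j$, but the underlying mechanism is the one you describe).

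The genuine gap is in your degree count. Your bound ``$a^\sharp$-count of $\widetilde\FockB_{\l_i}\le 2\l_i+1$'' is valid but far too loose: summing (a)+(b)+(c) as you describe gives
\[
(3m+2\eta)+\sum_i(2\l_i+1)+k \;=\; 2j+m+2\eta+2k \;\le\; 4j-m+2\eta,
\]
which for $m=0$ is $4j+2\eta$, not $3j+2\eta$. The sharp input you are missing is that the $a^\sharp$-degree of $\FockB_\l$ is at most $\l+1$ (it equals $2$ for $\l=1$, $\,\l+1$ for even $\l$, and $0$ for odd $\l\ge 3$; see \eqref{eqn:def:FockB_l}). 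With this, each integrated factor contributes at most $(\l_i+1)+1$ to the combined $(s,a^\sharp)$-degree, so the $\FockT_{j-m}$ block gives at most $\sum_i(\l_i+1)+k=(j-m)+2k\le 3(j-m)$, and adding $p+q\le 3m+2\eta$ from the two $\Phi$-vectors yields exactly $3j+2\eta$. The paper obtains this via the equivalence relations $\FockF_j$, $\FockF_{\le j}$ in Section~\ref{subsec:proof:prop:computation}, which also show that the bound is attained (by the term $\FockI^{(j-m)}_{(1,\dots,1)}$ with $p=3n+\eta$, $q=3(m-n)+\eta$). Once you replace $2\l_i+1$ by $\l_i+1$, your argument goes through.
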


For the ground state $\PsiNgs\in\ENzex$, an explicit computation of the leading and next-to-leading order of the approximation is still feasible and yields the following result (see Section \ref{subsec:proof:polynomials} for the details of the computation):

\begin{proposition}\label{prop:polynomials}
Let $\eta=0$. For $j=0,1$, the polynomials in \eqref{eqn:prop:computation:eta} are given by
\begin{equation}
p^{(0)}_0(s)=1\,,\qquad p^{(0)}_1(s)=-\frac{\i}{6} \alpha_3 s^3\,,
\end{equation} 
where
\begin{equation}\label{def:alpha}
\alpha_3 = 12 \Re\lr{\nu^{\otimes 3},\Theta^{(0)}_{1,3}} +\lr{\nu,\left(\Uz q\tB q\Uz^*+\overline{\Vz q\tB q\Vz^*}\right)\nu}  +4\Re\lr{\nu,\Uz q\tB q\Vz^*\overline{\nu}}\,,
\end{equation}
and where $\Theta^{(0)}_{1,3}$ is given in \cite[Appendix~B]{QF}.
\end{proposition}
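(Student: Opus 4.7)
The plan is to compute the sum in \eqref{eqn:prop:computation:eta} for $\eta=0$ and $j\in\{0,1\}$ directly, using three basic tools: that $\FockB_0=\ad(qB\varphi)+a(qB\varphi)$ generates a Weyl operator so $\e^{\i s\FockB_0}=W(\i s\,qB\varphi)$; the Bogoliubov covariance \eqref{Bog_of_Weyl}, which yields $\BogUz W(\i s\,qB\varphi)\BogUz^*=W(\i s\nu)$ with $\nu$ from \eqref{eqn:sigma}; and the vacuum identity \eqref{Weyl_in_vacuum}. For $j=0$ the sum reduces to $\lr{\Chi_0,\FockT_0(s)\Chi_0}$, and since $\Chi_0=\BogUz^*\vac$ these three facts immediately give $\lr{\Omega,W(\i s\nu)\Omega}=\e^{-\frac12 s^2\sigma^2}$, so that $p_0^{(0)}(s)=1$.

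For $j=1$ the sum splits as $\lr{\Chi_0,\FockT_1(s)\Chi_0}+\lr{\Chi_0,\FockT_0(s)\Chi_1}+\lr{\Chi_1,\FockT_0(s)\Chi_0}$. I would first treat the two mixed terms. Using \eqref{BCH}, $W(-\i s\nu)\vac=\e^{-\frac12 s^2\sigma^2}\sum_{k\geq0}\frac{(-\i s)^k}{k!}(\ad(\nu))^k\vac$, and by \eqref{eqn:def:Theta:spectrum:eta} the transformed excitation $\BogUz\Chi_1$ has only a $1$-particle piece $\ad(\Theta^{(0)}_{1,1})\vac$ and a $3$-particle piece built from $\Theta^{(0)}_{1,3}$. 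Taking the inner product, only $k=1$ and $k=3$ survive, producing $\e^{-\frac12 s^2\sigma^2}\bigl[\i s\lr{\nu,\Theta^{(0)}_{1,1}}-\i s^3\lr{\nu^{\otimes3},\Theta^{(0)}_{1,3}}\bigr]$. Adding the conjugate replaces the bracket by twice its real part; in particular the $s^3$ contribution is $-2\i s^3\Re\lr{\nu^{\otimes3},\Theta^{(0)}_{1,3}}\,\e^{-\frac12 s^2\sigma^2}$, which accounts for the $12\,\Re\lr{\nu^{\otimes3},\Theta^{(0)}_{1,3}}$ summand in $\alpha_3$ once the overall $-\i/6$ prefactor is extracted.

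The diagonal term $\lr{\Chi_0,\FockT_1(s)\Chi_0}$ is the technically heavier piece. I would rewrite
\begin{equation*}
\FockT_1(s)=\i\int_0^s\d\tau\,\e^{\i\tau\FockB_0}\FockB_1\e^{-\i\tau\FockB_0}\cdot\e^{\i s\FockB_0},
\end{equation*}
conjugate with $\BogUz$ throughout, and use the Weyl product law $W(f)W(g)=W(f+g)\e^{-\i\,\Im\lr{f,g}}$ together with the shift rule \eqref{eqn:Weyl} to bring the expression into the form $\lr{\Omega,(\text{shifted }\BogUz\FockB_1\BogUz^*)\,W(\i s\nu)\Omega}$. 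Since $\FockB_1=\d\Gamma(q\tB q)-B^{(1)}$, and \eqref{eqn:trafo:ax} decomposes $\BogUz\d\Gamma(q\tB q)\BogUz^*$ into the four quadratic monomials $\ad\ad,\ \ad a,\ a\ad,\ aa$ plus a Wick-ordering c-number, applying Wick's theorem \eqref{eqn:Wick} to pair these operators with $W(\i s\nu)\vac$ produces, after the $\tau$-integration, precisely the $s^3$ contributions $-\tfrac{\i}{6}s^3\lr{\nu,\bigl(\Uz q\tB q\Uz^*+\overline{\Vz q\tB q\Vz^*}\bigr)\nu}\,\e^{-\frac12 s^2\sigma^2}$ and $-\tfrac{\i}{6}\cdot4\,s^3\Re\lr{\nu,\Uz q\tB q\Vz^*\overline{\nu}}\,\e^{-\frac12 s^2\sigma^2}$, i.e.\ the remaining two summands of $\alpha_3$.

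Combining all contributions, the $s^3$ coefficient of $p_1^{(0)}$ is exactly $-\i\alpha_3/6$. The coefficient of $s$ must vanish by the parity statement in Proposition~\ref{prop:computation}, so the linear-in-$s$ piece $2\i s\Re\lr{\nu,\Theta^{(0)}_{1,1}}$ from the mixed terms has to be cancelled by the linear contributions arising from $B^{(1)}$ and the Wick constant in $\lr{\Chi_0,\FockT_1(s)\Chi_0}$; this cancellation can be verified using the explicit formula for $B^{(1)}$ in Lemma~\ref{lem:known:expansion:B} together with the first-order perturbative equation satisfied by $\Theta^{(0)}_{1,1}$ in \cite{spectrum}. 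The main obstacle is the bookkeeping in the diagonal term: tracking the four quadratic pieces generated by the Bogoliubov conjugation of $\d\Gamma(q\tB q)$, their shifts under the two surrounding Weyl operators, and the correct combinatorial factors after $\tau$-integration, while simultaneously ensuring that the $s^1$ cancellation collapses the answer to the compact three-term form of \eqref{def:alpha}.
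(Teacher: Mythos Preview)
Your approach is essentially the same as the paper's: split the $j=1$ sum into the two mixed terms and the diagonal term, conjugate with $\BogUz$, use the Weyl shift rules and \eqref{BCH}, and read off the coefficients of $s$ and $s^3$. Your identification of the three summands of $\alpha_3$ is correct.

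There is, however, a genuine slip in the treatment of the linear term. The parity statement in Proposition~\ref{prop:computation} only says that $p_1^{(0)}$ is an odd polynomial of degree $3$; this rules out $s^0$ and $s^2$ but \emph{not} $s^1$. So the sentence ``the coefficient of $s$ must vanish by the parity statement'' is wrong, and the vanishing of the $s$-coefficient is exactly the nontrivial cancellation that has to be proved here. You do then indicate how to verify it, but the paper's argument is both simpler and avoids the extra input you invoke: no perturbative equation for $\Theta^{(0)}_{1,1}$ is needed. One simply inserts the explicit formula for $B^{(1)}$ from Lemma~\ref{lem:known:expansion:B},
\[
B^{(1)}=\lr{\Chiz,\phi(qB\varphi)\Chi_1}+\lr{\Chi_1,\phi(qB\varphi)\Chiz}+\lr{\Chiz,\d\Gamma(q\tB q)\Chiz},
\]
and observes that after conjugation with $\BogUz$ the first two terms give $2\Re\lr{\nu,\Theta^{(0)}_{1,1}}$ and the third gives $\Tr(\Vz q\tB q\Vz^*)=\tilde c_1$. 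In other words, $B^{(1)}$ equals precisely the $s$-derivative at $s=0$ of the mixed terms plus the $\d\Gamma$-part of the diagonal term, which forces the linear contributions to cancel. This is the clean mechanism; replace your appeal to parity by this computation.
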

Theorem \ref{thm:edgeworth} follows from Propositions \ref{prop:decomposition} to \ref{prop:polynomials}  by Fourier inversion (see Section \ref{subsec:proof:thm:edgeworth} for the proof). 
For excited states $\PsiNex\in\ENmex$ with $\eta>0$, we explicitly compute only the leading order polynomial. The proof of the following proposition is given in Section~\ref{app:no:clt}.

\begin{proposition}\label{thm:no:clt}
Let $\eta>0$ and denote the quasi-particle states by $\xi_1\mydots\xi_\eta\in L^2(\R^d)$, i.e.,
\begin{equation}\label{eqn:def:Chi_0^k}
\Chi_0= \BogUz^*\ad(\xi_1)\mycdots\ad(\xi_\eta)\vac\,.
\end{equation}
Then $\mathfrak{p}_0^\ex$ in Theorem \ref{thm:ex} is given by 
\begin{equation}
\mathfrak{p}_{0}^\ex(x) = \sum_{\l=0}^\eta c_{\eta,\l} \left(\frac{-\i}{\sigma}\right)^{2\ell} H_{2\ell}\Big(\frac{x}{\sigma}\Big)\,,
\end{equation}
with $H_k$ the $k$-th Hermite polynomial (as defined in \eqref{def:Hermite}) and where
\begin{equation}
c_{\eta,\l}
:=\frac{(-1)^{\l}}{(\eta-\ell)!((\l)!)^2}\sum_{\pi,\pi'\in\mathfrak{S}_\eta}
\prod_{j=1}^{\eta-\ell} \lr{\xi_{\pi'(j)},\xi_{\pi(j)}}
\prod_{j'=\eta-\l+1}^\eta \lr{\xi_{\pi'(j')},\nu} \lr{\nu,\xi_{\pi(j')}} \,.
\end{equation}
\end{proposition}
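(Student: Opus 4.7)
The plan is to compute the leading-order contribution $\lr{\Chi_0,\FockT_0(s)\Chi_0}$ in the expansion of the characteristic function from Propositions~\ref{prop:decomposition} and~\ref{prop:computation}, and then Fourier-invert term by term. Since $\FockB_0=\ad(qB\varphi)+a(qB\varphi)$, one has $e^{\i s\FockB_0}=W(\i s qB\varphi)$, and the Bogoliubov transformation rule \eqref{Bog_of_Weyl} together with antilinearity of $a$ gives $\BogUz e^{\i s\FockB_0}\BogUz^*=W(\i s\nu)$ with $\nu=\Uz qB\varphi+\overline{\Vz qB\varphi}$ from \eqref{eqn:sigma}. Substituting $\Chi_0$ from \eqref{eqn:def:Chi_0^k} and conjugating through $\BogUz$ therefore reduces the computation to
\begin{equation*}
\lr{\Chi_0,e^{\i s\FockB_0}\Chi_0}=\lr{\ad(\xi_1)\cdots\ad(\xi_\eta)\vac,\;W(\i s\nu)\,\ad(\xi_1)\cdots\ad(\xi_\eta)\vac}.
\end{equation*}

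I would then apply the BCH identity \eqref{BCH} to split $W(\i s\nu)=e^{\i s\ad(\nu)}e^{\i s a(\nu)}e^{-\frac12 s^2\sigma^2}$, place $e^{\i s a(\nu)}$ on the right and its adjoint $e^{-\i s a(\nu)}$ on the left, and push them past the creation operators using the canonical commutation $[a(\nu),\ad(\xi_j)]=\lr{\nu,\xi_j}$. Because $e^{\pm\i s a(\nu)}\vac=\vac$, this yields
\begin{equation*}
\lr{\Chi_0,e^{\i s\FockB_0}\Chi_0}=e^{-\frac12 s^2\sigma^2}\Big\langle\prod_{k=1}^\eta(\ad(\xi_k)-\i s\lr{\nu,\xi_k})\vac,\;\prod_{j=1}^\eta(\ad(\xi_j)+\i s\lr{\nu,\xi_j})\vac\Big\rangle.
\end{equation*}

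Next, I would expand the two products over subsets $S,T\subseteq\{1,\dots,\eta\}$ that record the slots at which the creation operator is retained (rather than replaced by a scalar shift). Using that $\lr{\prod_{k\in S}\ad(\xi_k)\vac,\prod_{j\in T}\ad(\xi_j)\vac}$ equals the permanent $\sum_{\pi:S\to T}\prod_{k\in S}\lr{\xi_k,\xi_{\pi(k)}}$ when $|S|=|T|$ and vanishes otherwise, the inner product above reduces to a finite sum indexed by $\ell:=\eta-|S|$ with an overall factor $(\i s)^{2\ell}=(-s^2)^\ell$. The main obstacle is the combinatorial identification of this subset/bijection sum with the double permutation sum defining $c_{\eta,\ell}$: every triple $(S,T,\pi:S\to T)$ with $|S|=|T|=\eta-\ell$ is produced by exactly $(\eta-\ell)!(\ell!)^2$ pairs $(\pi,\pi')\in\mathfrak{S}_\eta^2$, corresponding to the $(\eta-\ell)!$ orderings of $T$ by $\pi'|_{[1,\eta-\ell]}$ (which, together with $\pi|_{[1,\eta-\ell]}$, encodes the bijection), the $\ell!$ orderings of $T^c$ by $\pi'|_{[\eta-\ell+1,\eta]}$, and the $\ell!$ orderings of $S^c$ by $\pi|_{[\eta-\ell+1,\eta]}$. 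The explicit $(-1)^\ell$ in $c_{\eta,\ell}$ cancels against the $(-1)^\ell$ from $(\i s)^{2\ell}$, leaving $p_0^{(\eta)}(s)=\sum_{\ell=0}^\eta c_{\eta,\ell}\,s^{2\ell}$.

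Finally, I would Fourier-invert termwise. The Rodrigues formula \eqref{def:Hermite} yields
\begin{equation*}
\frac{1}{2\pi}\int_\R\d s\,s^{2\ell}\,e^{-\frac12 s^2\sigma^2}e^{-\i s x}=\Big(\frac{-\i}{\sigma}\Big)^{2\ell}H_{2\ell}\Big(\frac{x}{\sigma}\Big)\,\frac{1}{\sqrt{2\pi\sigma^2}}\,e^{-\frac{x^2}{2\sigma^2}},
\end{equation*}
and combining this with \eqref{eqn:E[g]} and Propositions~\ref{prop:decomposition}--\ref{prop:computation} identifies the leading-order density $\mathfrak{p}_0^\ex(x)\,\frac{1}{\sqrt{2\pi\sigma^2}}e^{-x^2/(2\sigma^2)}$ as the sum $\sum_{\ell=0}^\eta c_{\eta,\ell}(-\i/\sigma)^{2\ell}H_{2\ell}(x/\sigma)$ against the Gaussian, proving the claim.
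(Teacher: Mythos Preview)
Your proposal is correct and follows essentially the same route as the paper: reduce the leading-order characteristic function to the Weyl expectation $\lr{\Omega,a(\xi_1)\cdots a(\xi_\eta)W(\i s\nu)\ad(\xi_1)\cdots\ad(\xi_\eta)\Omega}$, expand, match particle numbers, and Fourier-invert via the Hermite identity. The one organizational difference is that you split $W(\i s\nu)$ symmetrically via BCH and push $e^{\i s a(\nu)}$ to the right and $e^{\i s\ad(\nu)}$ to the left, whereas the paper pushes $W(\i s\nu)$ only through the creation operators on the right (keeping a residual $e^{\i s\ad(\nu)}\vac$) and then expands that exponential separately; your symmetric variant is slightly cleaner and makes the multiplicity $(\eta-\ell)!(\ell!)^2$ in the subset/bijection--to--double-permutation identification more transparent, but the two arguments are otherwise the same computation.
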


Note that $\xi_i=\xi_j$ for $i\neq j$ is admitted, and that the formula for $c_{n,\l}$ simplifies if the functions $\xi_j$ are orthonormal. For $\eta=1$, we recover \eqref{eqn:prob:density:one:exc}.

\subsection{Proofs of the propositions}

\subsubsection{Proof of Lemma \ref{lem:B}}\label{subsec:proof:lem:B}
We decompose $\cBN$ as
\begin{equation}
\cBN=\frac{1}{\sqrt{N}}\d\Gamma(\tilde{B})=\frac{1}{\sqrt{N}}\left(\d\Gamma(pBq)+\d\Gamma(qBp)+\d\Gamma(q\tilde{B}q)\right)
\end{equation}
with
\begin{equation}
\tilde{B}:=B-\lr{\varphi,B\varphi}\,.
\end{equation}
Note that
\begin{equation}
\sqrt{1-\frac{\Np}{N}} = \sum_{\l=0}^b c_\l N^{-\l}\Np^\l + N^{-(b+1)}\tilde{\FockR}_{2b}\,,\qquad
\norm{\tilde{\FockR}_{2b}\bxi}\leq C(b)\norm{\Np^{b+1}\bxi}
\end{equation}
for any $\bxi\in\Fock$ and $b\in\N_0$ and where $[\tilde{\FockR}_{2b},\Np]=0$. Besides, by Lemma \ref{lem:known:expansion:B},  there exists some $r_B(a)\in\R$ with $|r_B(a)|\leq C(a)\onorm{B}$ such that
\begin{equation}
\lr{\PsiN,B_1\PsiN}-\lr{\varphi,B\varphi}=\sum_{\l=1}^a N^{-\l}B^{(\l)}+N^{-(a+1)}r_B(a)\,.
\end{equation}
Consequently,
\begin{eqnarray}
\FockB&=&N^{-\frac12}\UNp\left(\d\Gamma(qBp)+\d\Gamma(pBq)+\d\Gamma(q\tilde{B}q)-N\left(\lr{\PsiN,B_1\PsiN}-\lr{\varphi,B\varphi}\right)\right)\UNp^*\nonumber\\
&=&\ad(qB\varphi)\sqrt{1-\frac{\Np}{N}}+\sqrt{1-\frac{\Np}{N}}a(qB\varphi)+N^{-\frac12}\d\Gamma(q\tilde{B}q)\nonumber\\
&&-\left(\sum_{\l=1}^{a}N^{-(\l-\frac12)}B^{(\l)} + N^{-(a+\frac12)}r_B(a)\right)\nonumber\\
&=& \sum_{\l=0}^a N^{-\frac{\l}{2}}\FockB_\l + N^{-\frac{a+1}{2}}\FockR_a
\end{eqnarray}
for $\FockB_\l$ as in \eqref{eqn:B:series} and where $\FockR_a$ satisfies \eqref{eqn:estimate:R_a}.
\qed

\subsubsection{Proof of Proposition \ref{prop:decomposition}}\label{subsec:proof:prop:decomposition}

From \eqref{eqn:B:series}, it follows that $\FockB-\FockB_0=N^{-1/2}\FockR_0$ with 
\begin{equation}
\FockR_0=\sum\limits_{\l=1}^b N^{-\frac{\l-1}{2}}\FockB_\l+N^{-\frac{b}{2}}\FockR_b \,.
\end{equation}
Hence, \eqref{eqn:Duhamel:B:B_0} implies that
\begin{eqnarray}
\e^{\i s\FockB}
&=&\e^{\i s\FockB_0} + N^{-\frac12}\int\limits_{\Delta_1}^s\d\btau\e^{\i\tau_1\FockB}
\left(\sum\limits_{\l_1=1}^a N^{-\frac{\l_1-1}{2}}\FockB_{\l_1}+N^{-\frac{a}{2}}\FockR_a\right)
\e^{\i(s-\tau_1)\FockB_0}\nonumber\\
&=&\e^{\i s\FockB_0} + \sum_{\l_1=1}^aN^{-\frac{\l_1}{2}}\int_{\Delta_1}^s\d\btau\e^{\i\tau_1\FockB_0}\FockB_{\l_1}\e^{\i(s-\tau_1)\FockB_0}\nonumber\\
&&+N^{-\frac{a+1}{2}} \bigg[ \int_{\Delta_1}^s\d\btau\e^{\i\tau_1\FockB}\FockR_a\e^{\i(s-\tau_1)\FockB_0}
+\sum_{\l_1=1}^a\int_{\Delta_2}^s\d\btau\e^{\i\tau_2\FockB}\FockR_{a-\l_1}\e^{\i(\tau_1-\tau_2)\FockB_0}\FockB_{\l_1}\e^{\i(s-\tau_1)\FockB_0}\bigg]\nonumber\\
&&+\sum_{\l_1=1}^a\sum_{\l_2=1}^{a-\l_1}N^{-\frac{\l_1+\l_2}{2}}\int_{\Delta_2}^s\d\btau\e^{\i\tau_2\FockB}\FockB_{\l_2}\e^{\i(\tau_1-\tau_2)\FockB_0}\FockB_{\l_1}\e^{\i(s-\tau_1)\FockB_0}\nonumber\\
&=&\dots\,,
\end{eqnarray}
which eventually leads to the expansion
\begin{equation}
\e^{\i s\FockB}=\sum_{j=0}^aN^{-\frac{j}{2}}\FockT_j(s) 
+ N^{-\frac{a+1}{2}}\sum_{k=1}^{a+1}\sum_{\substack{\bl\in\N^{k-1}\\|\bl|\leq a}}\FockJ^{(k)}_{a;\bl}(s)
\end{equation}
with $\FockT_j(s)$ and $\FockJ^{(k)}_{a;\bl}(s)$ as in \eqref{def:TSIJ} and \eqref{def:I:J}.
This implies \eqref{eqn:prop:decomposition} by \eqref{eqn:expansion:Chi:spectrum:eta} because
\begin{eqnarray}
\phi_N(s)
&=&\lr{\Chi-\sum_{n=0}^aN^{-\frac{n}{2}}\Chi_n,\e^{\i s\FockB}\Chi}
+\sum_{n=0}^aN^{-\frac{n}{2}}\lr{\Chi_n,\e^{\i s\FockB}\left(\Chi-\sum_{m=0}^{a-n}N^{-\frac{m}{2}}\Chi_m\right)}\nonumber\\
&&+\sum_{n=0}^a\sum_{m=0}^{a-n}N^{-\frac{n+m}{2}}\lr{\Chi_n,\e^{\i s\FockB}\Chi_m}\,.
\end{eqnarray}
\qed

\subsubsection{Proof of Proposition \ref{prop:S}}\label{subsec:proof:prop:S}

Recall from \eqref{def:TSIJ} that
$$\cS_a(s) = \sum_{m=0}^a\sum_{n=0}^{a-m}\sum_{\mu=0}^m\sum_{k=1}^{\mu+1}\sum_{\substack{\bl\in\N^{k-1}\\|\bl|=\mu}}\lr{\Chi_{a-m-n},\FockJ^{(k)}_{m;\bl}(s)\Chi_n}$$
with 
$$\FockJ^{(k)}_{m;\bl}(s)
= \int_{\Delta_{k}}^s\d\btau\e^{\i\tau_{k}\FockB}\FockR_{m-|\bl|}\e^{\i(\tau_{k-1}-\tau_{k})\FockB_0}\FockB_{\l_{k-1}}\e^{\i(\tau_{k-2}-\tau_{k-1})\FockB_0}\FockB_{\l_{k-2}}
\mycdots \FockB_{\l_1}\e^{\i(s-\tau_1)\FockB_0}\,.$$
By \eqref{eqn:estimate:R_a}, we find for any $\bxi,\bxi'\in\Fock$ that
\begin{subequations}\label{eqn:estimate:J}
\begin{align}
&\left|\lr{\bxi',\FockJ^{(k)}_{m;\bl}(s)\bxi}\right| \nonumber\\
&\quad\leq\left|\int_{\Delta_k}^s\d\btau\norm{\FockR_{m-|\bl|}\e^{\i(\tau_{k-1}-\tau_k)\FockB_0}\FockB_{\l_{k-1}}
\mycdots \FockB_{\l_1}\e^{\i(s-\tau_1)\FockB_0}\bxi}\norm{\bxi'}\right|\nonumber\\
&\quad\leq C(m) \onorm{B}\norm{\bxi'} \int_{[0,s]^k}\d\btau \Bigg[ \norm{(\Np+1)^{m-|\bl|+1}W(\i\delta_{k-1}f)\FockB_{\l_{k-1}}
\mycdots \FockB_{\l_1}W(\i\delta_0 f) \bxi} \label{eqn:estimate:J:1} \\
&\qquad\qquad\qquad\qquad\qquad + \delta_{m,|\bl|} N^{-1/2} 
\norm{(\Np+1)^{3/2} W(\i\delta_{k-1}f) \FockB_{\l_{k-1}} \mycdots \FockB_{\l_1}W(\i\delta_0 f) \bxi} \Bigg] \label{eqn:estimate:J:2}
\end{align}
\end{subequations}
where we used the notation $f=qB\varphi$  and abbreviated
\begin{equation}
\delta_{k-1}:=\tau_{k-1}-\tau_k\,\quad \delta_0:=s-\tau_1\,.
\end{equation}
By definition \eqref{eqn:def:FockB_l} of the operators $\FockB_\l$ and using Lemma \ref{lem:known:expansion:B} we find that
\begin{equation}\label{eqn:estimate:FockB_l}
\norm{(\Np+1)^b\FockB_\l\bxi}\leq C(\l,b)\onorm{B}\norm{(\Np+1)^{b+\gamma_\l}\bxi}\,,\quad
\gamma_\l=\begin{cases}
0 & \text{ if } \l\geq 3 \text{ odd}\\
1 & \text{ if }\l=1\\
\frac{\l+1}{2} & \text{ if } \l \text{ even}
\end{cases}
\end{equation}
for any $b\in\frac{1}{2}\N_0$.
With $|\delta_j|\leq |s|$ for all $j\in\{0\mydots k-1\}$ for $\btau\in[0,s]^k$,
Lemma \ref{lem:aux:W} and \eqref{eqn:estimate:FockB_l} imply 
\begin{align}
&\norm{(\Np+1)^bW(\i\delta_{k-1}f)\FockB_{\l_{k-1}} \bxi} \nonumber\\
&\quad\leq C(\bl,b) \left( \norm{(\Np+1)^b \FockB_{\l_{k-1}} \bxi} + (s\onorm{B})^{2b} \norm{\FockB_{\l_{k-1}}\bxi} \right) \nonumber\\
&\quad\leq C(\bl,b) \onorm{B} \left( \norm{(\Np+1)^{b+\gamma_{\l_{k-1}}} \bxi} + (s\onorm{B})^{2(b+\gamma_{\l_{k-1}})} \norm{\bxi} \right)\,.
\end{align}
Using this estimate repeatedly yields
\begin{align}
&\norm{(\Np+1)^b W(\i\delta_{k-1}f) \FockB_{\l_{k-1}} \mycdots \FockB_{\l_1}W(\i\delta_0 f) \bxi}\nonumber\\
&\qquad \leq C(\bl,b)(1+\onorm{B}^{k-1+2(b+\Gamma_\bl)})\left(1+|s|^{2(b+\Gamma_\bl)}\right)\norm{(\Np+1)^{b+\Gamma_\bl} \bxi}
\end{align}
where 
\begin{equation}
\Gamma_\bl:=\gamma_{\l_1}+\dots+\gamma_{\l_{k-1}}\,, \qquad 0\leq\Gamma_\bl\leq |\bl| 
\end{equation}
by definition   \eqref{eqn:estimate:FockB_l} of $\gamma_\l$.
Moreover, $k\leq|\bl|+1$, hence
\begin{subequations}
\begin{align}
\eqref{eqn:estimate:J:1}
&\leq C(m)(1+\onorm{B}^{|\bl|+2m+3})\norm{(\Np+1)^{m+1} \bxi}\norm{\bxi'}\left(1+|s|^{2m+|\bl|+3}\right)\,,\\
\eqref{eqn:estimate:J:2}
&\leq \delta_{m,|\bl|} C(m)N^{-\frac12}(1+\onorm{B}^{3m+4})\norm{(\Np+1)^{3/2+m} \bxi}\norm{\bxi'}\left(1+|s|^{3m+4}\right)\,.
\end{align}
\end{subequations}
By \eqref{lem:moments:Chil}, we conclude that
\begin{equation}
|\cS_a(s)| \leq C(a)(1+\onorm{B}^{3a+4})\left(1+|s|^{3a+3}+N^{-\frac12}|s|^{3a+4}\right)\,.
\end{equation}
\qed

\subsubsection{Proof of Proposition \ref{prop:computation}}\label{subsec:proof:prop:computation}
Let us introduce the abbreviation
\begin{equation}
\phi(g):=\ad(g)+a(g)
\end{equation}
for any $g\in\fH$, and denote as above
$f=qB\varphi$
and
$\nu=U_0f+\overline{V_0 f}$, 
for $U_0$ and $V_0$ as in \eqref{eqn:U_0:V_0}. Recall that $\sigma=\norm{\nu}$.
For an operator $A\in\cL(\fH)$, the relations \eqref{eqn:Weyl} for the Weyl operator yield
\begin{equation}
W(g)\d\Gamma(A)W(g)^*=\d\Gamma(A)-\phi(Ag)+\lr{g,Ag}\,,
\end{equation}
hence the operators $\FockB_\l$ from \eqref{eqn:def:FockB_l} transform as
\begin{subequations}
\label{eqn:FockB:conjugated}
\begin{eqnarray}
\e^{\i\tau\FockB_0}\FockB_{2\l+1}\e^{-\i\tau\FockB_0}
&=& \FockB_{2\l+1} \qquad (\l\geq1),\\
\e^{\i\tau\FockB_0}\FockB_1\e^{-\i\tau\FockB_0}&=& 
\left(\FockB_1-\tau\phi(\i q\tilde{B}qB\varphi)+\tau^2\lr{\varphi,Bq\tilde{B}qB\varphi}\right)\,,\\
\e^{\i\tau\FockB_0}\FockB_{2\l}\e^{-\i\tau\FockB_0}
&=&c_\l\bigg[\left(\ad(f)+\i\tau\norm{f}^2\right)\left(\Np-\tau\phi(\i f)+\tau^2\norm{f}^2\right)^\l\nonumber\\
&& + \left(\Np-\tau\phi(\i f)+\tau^2\norm{f}^2\right)^\l\left(a(f)-\i\tau\norm{f}^2 \right)\bigg]\,.
\end{eqnarray}
\end{subequations}
We summarize these expressions in the following way, keeping only track on the $\tau$-dependence and on the total number of creation/annihilation operators $a^\sharp$:

\begin{definition}[Equivalence classes]

Consider a self-adjoint polynomial of degree $j$ in $\Np$ and $a^\sharp$, i.e., an expression of the form
\begin{equation}\label{def:operator:polynomial}
\sum_{\substack{n,m\geq 0\\2n+m=j}}
\sum_{\nu=0}^{n} \sum_{\mu=0}^m
\sum_{\boldsymbol{k}\in\{-1,1\}^\mu}\Np^\nu\int\dx^{(\mu)}\xi_\mu(x^{(\mu)})a_{x_1}^{\sharp_{k_1}}\mycdots a_{x_\mu}^{\sharp_{k_\mu}}+\hc
\end{equation}
for some $\xi_\mu\in L^2(\R^{d\mu})$. Here, we used the notation $a^{\sharp_{-1}}:=a$ and $a^{\sharp_1}:=\ad$. 
\defit{
\item Two polynomials \eqref{def:operator:polynomial} are equivalent with respect to the relation $\sim$ iff  they have the same degree $j$ and the number of operator-valued distributions $a_x^\sharp$ in each summand is even/odd for  $j$ even/odd. We denote the representatives of the equivalence classes with respect to the  relation~$\sim$ by $\FockF_j$, i.e.,
\begin{equation}
\FockF_j\sim
\sum_{\substack{n,m\geq 0\\2n+m=j}}
\sum_{\nu=0}^{n} \sum_{\substack{0\leq\mu\leq m\\j+\mu \text{ even}}}
\sum_{\boldsymbol{k}\in\{-1,1\}^\mu}\Np^\nu\int\dx^{(\mu)}\xi_\mu(x^{(\mu)})a_{x_1}^{\sharp_{k_1}}\mycdots a_{x_\mu}^{\sharp_{k_\mu}}+\hc
\,.
\end{equation}

\item Two polynomials \eqref{def:operator:polynomial} are equivalent with respect to the relation $\sim_j$ iff  they have a degree $\leq j$ and the number of operator-valued distributions $a_x^\sharp$ in each summand is even/odd for  $j$ even/odd. We denote the representatives of the equivalence classes with respect to the  relation~$\sim_j$ by $\FockF_{\leq j}$, i.e.,
\begin{equation}
\FockF_{\leq j}\sim_j \FockF_{\tilde{j}}
\end{equation}
for any $\tilde{j}\leq j$.
When using the notation $\FockF_{\leq j}$, we will drop the index $j$ from $\sim_j$.
}
\end{definition}

With respect to these equivalence classes,
$\FockI^{(k)}_{\bl}(s)\sim\FockI^{(k)}_{\tilde{\bl}}(s)$ if $\bl$ and $\tilde{\bl}$ differ only by a permutation of indices. Moreover, $\FockI^{(k)}_{\bl}(s)$ is equivalent to the operator where $\int_{\Delta_j}^s\d\btau$ is replaced by $\int_{[0,s]^j}\d\btau$.
The identities \eqref{eqn:FockB:conjugated} yield
\begin{subequations}\label{eqn:FockB:conjugated:equiv}
\begin{eqnarray}
\tilde{\FockB}_{2\l+1}&:=&\int_0^s\e^{\i\tau\FockB_0}\FockB_{2\l+1}\e^{-\i\tau\FockB_0}\d\tau
\sim s \,\FockF_0 \,,\\
\tilde{\FockB}_1&:=&\int_0^s\e^{\i\tau\FockB_0}\FockB_1\e^{-\i\tau\FockB_0}\d\tau
\sim\sum_{q=1}^3s^q\,\FockF_{3-q}
\\
\tilde{\FockB}_{2\l}&:=&\int_0^s\e^{\i\tau\FockB_0}\FockB_{2\l}\e^{-\i\tau\FockB_0}\d\tau
\sim\sum_{q=1}^{2\l+2}s^q\, \FockF_{2\l+2-q}\,.
\end{eqnarray}
\end{subequations}
Consequently, for $|\bl|=j$,
\begin{equation}
\FockI^{(k)}_{\bl}(s) \sim \tilde{\FockB}_{\l_1}\tilde{\FockB}_{\l_2}\mycdots\tilde{\FockB}_{\l_k}\e^{\i s\FockB_0} 
\sim \tilde{\FockB}_1^{k_1}\tilde{\FockB}_2^{k_2}\mycdots\tilde{\FockB}_j^{k_j}\e^{\i s\FockB_0}
\end{equation}
where $(k_1\mydots k_j)\in\{0\mydots j\}^j$, $k_1+\dots+k_j=k$ and $\sum_{n=1}^j nk_n=j$.
From \eqref{eqn:FockB:conjugated:equiv}, one infers that
\begin{equation}
\tilde{\FockB}_{\l}^k\sim
\begin{cases}
s^k\, \FockF_0  & \text{ if } \l\geq 3 \text{ odd},\\[5pt]
\sum_{n=k}^{3k}s^n\,\FockF_{3k-n}& \text{ if } \l=1\,, \\[5pt]
\sum_{n=k}^{k(\l+2)}s^n\,\FockF_{k(\l+2)-n}& \text{ if } \l \text{ even}.
\end{cases}
\end{equation}
Using the notation 
$$k_\mathrm{odd}:=\sum_{\substack{3\leq q\leq j\\ q\text{ odd}}} k_q\,,\qquad j_\mathrm{odd}:=\sum_{\substack{3\leq q\leq j\\ q\text{ odd}}} q k_q
 $$
one computes
\begin{eqnarray}
\tilde{\FockB}_1^{k_1}\tilde{\FockB}_2^{k_2}\mycdots\tilde{\FockB}_j^{k_j}
&\sim&\left(\prod\limits_{\substack{3\leq q\leq j\\q \text{ odd}}}\tilde{\FockB}_q^{k_q}\right) 
\tilde{\FockB}_1^{k_1} \left(\prod\limits_{\substack{2\leq q\leq j\\q \text{ even}}}\tilde{\FockB}_q^{k_q}\right) \nonumber\\
&\sim&s^{k_\mathrm{odd}}\sum_{n_1=k_1}^{3k_1}s^{n_1}\FockF_{3k_1-n_1}\prod\limits_{\substack{2\leq q\leq j\\q \text{ even}}}\sum_{n_q=k_q}^{k_q(q+2)}s^{n_q}\FockF_{k_q(q+2)-n_q}\nonumber\\
&\sim&s^{k_\mathrm{odd}}\sum_{n=k-k_\mathrm{odd}}^{2k+j-(2k_\mathrm{odd}+j_\mathrm{odd})}s^n\FockF_{2k+j-(2k_\mathrm{odd}+j_\mathrm{odd})-n}\nonumber\\
&=&\sum_{n=k}^{2k+j-(k_\mathrm{odd}+j_\mathrm{odd})}s^n\FockF_{2k+j-(k_\mathrm{odd}+j_\mathrm{odd})-n}
\end{eqnarray}
and consequently
\begin{eqnarray}
\FockT_j(s)
&\sim&\sum_{k=1}^j\sum_{n=k}^{2k+j-(k_\mathrm{odd}+j_\mathrm{odd})}s^n\FockF_{2k+j-(k_\mathrm{odd}+j_\mathrm{odd})-n}\,\e^{\i s\FockB_0}\,.
\end{eqnarray}
Note that $k_\mathrm{odd}+j_\mathrm{odd}=\sum_{3\leq q\leq j \text{ odd}}(q+1)k_q$ is even, hence the power of $s$ and the degree of $\FockF$ sum up to an even/odd number if $j$ is even/odd. Moreover, the highest power of $s$ is attained for $k=j$ (where $k_\mathrm{odd}=j_\mathrm{odd}=0$), which corresponds to the term $\FockI^{(j)}_{(1,1\mydots 1)}(s)$.  
Hence, we conclude that
\begin{equation}\label{eqn:expansion:T_j}
\FockT_j(s)\sim
\left(\sum_{n=1}^{j-1} s^n\,\FockF_{\leq 3j-n} + \sum_{n=j}^{3j} s^n\,\FockF_{3j-n} \right)\,\e^{\i s\FockB_0}
\sim \sum_{n=1}^{3j} s^n\,\FockF_{\leq 3j-n} \,\e^{\i s\FockB_0}\,.
\end{equation}
Moreover, 
\begin{equation}\label{eqn:FockT:BT}
\BogUz\FockT_j(s)\BogUz^*\sim\sum_{\l=1}^{3j} s^\l\,\FockF_{\leq 3j-\l}\,W(\i s\nu)
\end{equation}
where we have used that 
$\BogUz\FockB_0\BogUz^* = \phi(\nu) $ and $\e^{\i s\phi(\nu)}=W(\i s\nu)$.
By \eqref{eqn:def:Theta:spectrum:eta}, we obtain
\begin{eqnarray}\label{chi_T_chi}
&&\hspace{-1.5cm}\lr{\Chi_n,\FockT_{j-m}(s)\Chi_{m-n}}\nonumber\\
&\sim&\sum\limits_{\substack{0\leq p\leq 3n+\eta\\p+n+\eta \text{ even}}}
\sum\limits_{\substack{0\leq q\leq 3(m-n)+\eta\\q+m-n+\eta \text{ even}}}
\sum_{\l=1}^{3(j-m)} s^\l
\int\dx^{(q+p)}\overline{\Theta_{n,p}^{(\eta)}(x_{q+1}\mydots x_{q+p})} \nonumber\\
&&
\times \Theta^{(\eta)}_{m-n,q}(x^{(q)})\lr{\Omega,a_{x_{q+1}}\mycdots a_{x_{q+p}} \FockF_{\leq 3(j-m)-\l}\,W(\i s\nu)\ad_{x_1}\mycdots\ad_{x_q}\Omega}.\qquad\quad
\end{eqnarray}
Using that
\begin{eqnarray}
W(\i s\nu)\ad_{x_1}\mycdots\ad_{x_q}\vac
=\e^{-\frac12s^2\sigma^2}(\ad_{x_1}+\i s\overline{\nu(x_1)})\mycdots (\ad_{x_q}+\i s\overline{\nu(x_q}))\e^{\i s\ad(\nu)}\vac\,,
\label{eqn:W:ad:Omega}
\end{eqnarray}
we find by permutation symmetry of $\Theta^{(\eta)}_{m-n,q}$ that
\begin{eqnarray}
&&\hspace{-1.5cm}\int\dx^{(q)}\Theta^{(\eta)}_{m-n,q}(x^{(q)})W(\i s\nu)\ad_{x_1}\mycdots\ad_{x_q}\vac\nonumber\\
&=&\e^{-\frac12s^2\sigma^2}\sum_{r=0}^q(\i s)^{q-r}\tbinom{q}{r}\int\dx^{(r)}\tilde{\Theta}^{(\eta)}_{m-n,q,r}(x^{(r)})\ad_{x_1}\mycdots\ad_{x_r}\e^{\i s\ad(\nu)}\vac
\end{eqnarray}
for $\tilde{\Theta}^{(\eta)}_{m-n,q,r}(x^{(r)})=\int\dx_{r+1}\mycdots\dx_q\overline{\nu(x_{r+1})}\mycdots\overline{\nu(x_q)}\Theta^{(\eta)}_{m-n,q}(x^{(q)})$.
The inner product in \eqref{chi_T_chi} is non-zero only if it contains equal numbers of creation and annihilation operators. Since the operators  $\FockF_{\leq 3(j-m)-\l}$ have been conjugated by a Bogoliubov transformation (see \eqref{eqn:FockT:BT}), they contain  at each degree of the polynomial all possible combinations of creation and annihilation operators. 
Hence, expanding $\e^{\i s\ad(\nu)}$ yields
\begin{eqnarray}
&&\hspace{-1cm}\int\dx^{(r)}\dx_{q+1}\mycdots\dx_{q+p}\tilde{\Theta}^{(\eta)}_{m-n,q,r}(x^{(r)})\overline{\Theta^{(\eta)}_{n,p}(x_{q+1}\mydots x_{p+q})}\nonumber\\
&&\qquad\times
 \lr{\Omega,a_{x_{q+1}}\mycdots a_{x_{q+p}}\FockF_{\leq 3(j-m)-\l} \,\ad_{x_1}\mycdots\ad_{x_r} \e^{\i s\ad(\nu)}\Omega}\nonumber\\
&=& \sum_{\nu=0}^{3(j-m)-\l} c^{(\eta)}_{\nu,j,m,n,\l,q,r}  s^{p+3(j-m)-\l-r-2\nu}
\end{eqnarray}
for some coefficients $c^{(\eta)}_{\nu,j,m,n,\l,q,r}\in\C$. In particular, there is a non-zero contribution from $\nu=0$ by \eqref{eqn:expansion:T_j}.
In summary, 
\begin{eqnarray}
&&\hspace{-1cm}\lr{\Chi_n,\FockT_{j-m}(s)\Chi_{m-n}}\nonumber\\
&\sim&\sum\limits_{\substack{0\leq p\leq 3n+\eta\\p+n+\eta \text{ even}}}
\sum\limits_{\substack{0\leq q\leq 3(m-n)+\eta\\q+m-n+\eta \text{ even}}}
\sum_{\l=1}^{3(j-m)} s^\l
\e^{-\frac12 \sigma^2 s^2}\sum_{r=0}^q (\i s)^{q-r}
\int\dx^{(r)}\dx_{q+1}\mycdots\dx_{q+p}\tilde{\Theta}^{(\eta)}_{m-n,q,r}(x^{(r)})\nonumber\\
&&\qquad\times \overline{\Theta^{(\eta)}_{n,p}(x_{q+1}\mydots x_{p+q})}\lr{\Omega,a_{x_{q+1}}\mycdots a_{x_{q+p}}\FockF_{\leq 3(j-m)-\l} \,\ad_{x_1}\mycdots\ad_{x_r} \e^{\i s\ad(\nu)}\Omega}\nonumber\\
&\sim&\e^{-\frac12 \sigma^2 s^2}
\sum\limits_{\substack{0\leq p\leq 3n+\eta\\p+n+\eta \text{ even}}}
\sum\limits_{\substack{0\leq q\leq 3(m-n)+\eta\\q+m-n+\eta \text{ even}}}
\sum_{\l=1}^{3(j-m)}
\sum_{r=0}^q 
\sum_{\nu=0}^{3(j-m)-\l} c^{(\eta)}_{\nu,j,m,n,\l,q,r}s^{p+q+3(j-m)-2(r+\nu)}\,.
\end{eqnarray}
Note that the highest power of $s$ is $3j+2\eta$ and that $p+q+3(j-m)-2(r+\nu)$ is even/odd when $3j$ is even/odd. This yields \eqref{eqn:prop:computation:eta} with 
\begin{equation}
p^{(\eta)}_j(s)=\sum_{\substack{0\leq k\leq 3j+2\eta\\k+j\text{ even}}}c^{(j,\eta)}_ks^k 
\end{equation}
for $c^{(j,\eta)}_k\in\C$ with $c^{(j,\eta)}_{3j+2\eta}\neq0$. 
\qed

\subsubsection{Proof of Proposition \ref{prop:polynomials}}\label{subsec:proof:polynomials}

From Propositions~\ref{prop:decomposition} and \ref{prop:S}, we know that
\begin{subequations}\label{eqn:next}
\begin{align}
\phi_N(s)
&= \e^{-\frac12 s^2\sigma^2} \nonumber\\
&\quad+\, \i N^{-\frac12} \int\limits_0^s\d\tau\lr{W(-\i sf)\Chiz, W(-\i\tau f) \d\Gamma(q\tB q) W(\i\tau f)\Chiz}\label{eqn:next:integral:term}\\
&\quad+\, N^{-\frac12} \Big( \lr{W(-\i sf) \Chiz, \Chi_1} + \lr{\Chi_1,W(\i sf)\Chiz}\Big) \label{eqn:next:mixed:terms} \\
&\quad-\, \i N^{-\frac12} B^{(1)} \int\limits_0^s\d\tau\lr{\Chiz, W(\i sf) \Chiz} \label{eqn:expectation:term}\\
&\quad+\cO(N^{-1}) \nonumber 
\end{align}
\end{subequations}
with $f=qB\varphi$ as above.
\medskip

\noindent\textbf{Computation of \eqref{eqn:next:integral:term}}.
As above, we abbreviate $\nu=\Uz q O \varphi + \Vzbar\,\overline{qO\varphi}$. With  \eqref{Bog_of_Weyl} and \eqref{BCH}, we find
\begin{align}
\eqref{eqn:next:integral:term} 
&= \i N^{-\frac12} \int\limits_0^s\d\tau\lr{W(-\i s\nu) \Omega, W(-\i \tau\nu) \BogUz\d\Gamma(q\tB q)\BogUz^* W(\i \tau\nu) \Omega} \nonumber\\
&= \i N^{-\frac12} \e^{-\frac12 s^2\sigma^2} \int\limits_0^s\d\tau\lr{\e^{-\i s\ad(\nu)} \Omega, W^*(\i \tau\nu) \BogUz\d\Gamma(q\tB q)\BogUz^* W(\i \tau\nu) \Omega}\,.
\end{align}
For any one-body operator $A$, any ONB $(\varphi_i)$ of $\fH_\perp$, and $g\in \fH_\perp$, we have
\begin{align}
W^*(g) \BogUz\d\Gamma(A)\BogUz^* W(g) &= \sum_{i,j} A_{ij} \Big( a(\overline{V_0\varphi_i}) + \lr{\overline{V_0\varphi_i},g} + \ad(U_0\varphi_i) + \lr{g,U_0\varphi_i} \Big) \nonumber\\
&\qquad \times \Big( a(U_0\varphi_j) + \lr{U_0\varphi_j,g} + \ad(\overline{V_0\varphi_j}) + \lr{g,\overline{V_0\varphi_j}} \Big)\,,
\end{align}
where we denoted $A_{ij}:=\lr{\varphi_i,A\varphi_j}$.
Consequently, expanding the exponential yields
\begin{align}\label{eqn:next:integral:term_computation}
\eqref{eqn:next:integral:term} 
&= \i N^{-\frac12} \e^{-\frac12 s^2\sigma^2} \sum_{i,j} (q\tB q)_{ij} \int\limits_0^s\d\tau\, \Big\langle \e^{-\i s\ad(\nu)} \Omega, \Bigg[ \ad(U_0\varphi_i) \ad(\overline{V_0\varphi_j}) \nonumber\\
&\quad + \Big( \lr{\overline{V_0\varphi_i},\i\tau\nu} + \lr{\i\tau\nu,U_0\varphi_i} \Big) \ad(\overline{V_0\varphi_j}) + \ad(U_0\varphi_i) \Big( \lr{U_0\varphi_j,\i\tau\nu} + \lr{\i\tau\nu,\overline{V_0\varphi_j}} \Big) \nonumber\\
&\quad + \lr{\overline{V_0\varphi_i},\overline{V_0\varphi_j}} + \Big( \lr{\overline{V_0\varphi_i},\i\tau\nu} + \lr{\i\tau\nu,U_0\varphi_i} \Big) \Big( \lr{U_0\varphi_j,\i\tau\nu} + \lr{\i\tau\nu,\overline{V_0\varphi_j}} \Big) \Bigg] \Omega \Big\rangle \nonumber\\
&= \i N^{-\frac12} \e^{-\frac12 s^2\sigma^2} \big( \tilde{c}_1 s + \tilde{c}_3 s^3 \big)\,,
\end{align}
where $\tilde{c}_1,\tilde{c}_3\in\R$ are given by
\begin{subequations}
\begin{eqnarray}
\tilde{c}_1 &=& \Tr(\Vz q\tB q\Vz^*)\,,\\
\tilde{c}_3 &=& -\tfrac{1}{6}\Big(\lr{\nu,\Uz q\tB q\Uz^*\nu}  +\lr{\overline{\nu},\Vz q\tB q\Vz^*\overline{\nu}}\Big) -\tfrac{2}{3}\Re\lr{\nu,\Uz q\tB q\Vz^*\overline{\nu}}\,.
\end{eqnarray}
\end{subequations}

\noindent\textbf{Computation of \eqref{eqn:next:mixed:terms}}.
Using that
\begin{equation}
\Chi_1 = \BogUz^* \left( \int\dx\Theta^{(0)}_{1,1}(x)\ad_x|\Omega\rangle
+\int\dx^{(3)}\Theta^{(0)}_{1,3}(x^{(3)})\ad_{x_1}\ad_{x_2}\ad_{x_3}|\Omega\rangle\ \right)
\end{equation}
by Lemma~\ref{lem:known:expansion:Chi:eta}, one computes 
\begin{equation}
\eqref{eqn:next:mixed:terms}
=\i N^{-\frac12}\e^{-\frac12 s^2\sigma^2}\left(2\Re\lr{\nu,\Theta^{(0)}_{1,1}}s-2\Re\lr{\nu^{\otimes 3},\Theta^{(0)}_{1,3}}s^3\right)\,.
\end{equation}

\noindent\textbf{Computation of \eqref{eqn:expectation:term}}. We find, using first \eqref{Weyl_in_vacuum} and then Lemma~\ref{lem:known:expansion:B}, that
\begin{align}\label{eqn:expectation:term_computation}
\eqref{eqn:expectation:term} &= - \i N^{-1/2} \e^{-\frac12 s^2\sigma^2} s B^{(1)} \nonumber\\
&= - \i N^{-\frac12} \e^{-\frac12 s^2\sigma^2} s \left( \lr{\Chiz, \phi(qB\varphi) \Chi_1} +\lr{\Chi_1, \phi(qB\varphi) \Chiz} + \lr{\Chiz,\d\Gamma(q \tB q) \Chiz}\right) \nonumber\\
&= - \e^{-\frac12 s^2\sigma^2} s \, \frac{\d}{\d s} \Big( \eqref{eqn:next:integral:term} + \eqref{eqn:next:mixed:terms} \Big)\Big|_{s=0} \nonumber\\
&= -\i N^{-\frac12} \e^{-\frac12 s^2\sigma^2} s \left(\tilde{c}_1 + 2\Re \lr{\Theta^{(0)}_{1,1},\nu} \right). 
\end{align}
This concludes the proof of Proposition \ref{prop:polynomials}.
\qed

\subsection{Proof of Theorem \ref{thm:edgeworth}}\label{subsec:proof:thm:edgeworth}

Combining Propositions \ref{prop:decomposition}, \ref{prop:computation} and \ref{prop:S}, we find that
\begin{eqnarray}
\bigg|\phi_N(s)
-\sum_{j=0}^aN^{-\frac{j}{2}}p_j^{(\eta)}(s)\e^{-\frac12 s^2\sigma^2}
\bigg| 
\leq C_B(a)\left(N^{-\frac{a+1}{2}}(1+|s|^{3a+3})+N^{-\frac{a+2}{2}}|s|^{3a+4}\right)\,.
\end{eqnarray}
Consequently, by \eqref{eqn:E[g]},
\begin{eqnarray}\bigg|\mathbb{E}[g(\cBN)]-
\sum_{j=0}^aN^{-\frac{j}{2}} \int_\R\ds\,\hat{g}(s)p_j^{(\eta)}(s)\e^{-\frac12 s^2\sigma^2}\bigg|\leq C_B(g,a)N^{-\frac{a+1}{2}} 
\end{eqnarray}
because $\hat{g}\in L^1(\R,(1+|s|^{3a+4})$. Finally, Plancherel's theorem implies that
\begin{eqnarray}
\int\ds\hat{g}(s)s^k\e^{-\frac12 s^2\sigma^2}
&=&\frac{1}{\sqrt{2\pi\sigma^2}}\int \dx g(x)\left(\i\frac{\d}{\dx}\right)^k\e^{-\frac{x^2}{2\sigma^2}}\nonumber\\
&=&\frac{1}{\sqrt{2\pi\sigma^2}}\left(\frac{-\i}{\sigma}\right)^k\int g(x) H_k\left(\frac{x}{\sigma}\right) \e^{-\frac{x^2}{2\sigma^2}}\,,\label{plancherel}
\end{eqnarray}
where $H_k(x)$ is the $k$-th Hermite polynomial as defined in \eqref{def:Hermite}.
This yields \eqref{eqn:thm} with polynomials $\mathfrak{p}_{j}(x)$ of degree $3j + 2\eta$ in $x\in\R$ which are even/odd for $j$ even/odd. Note that the coefficients of the $\mathfrak{p}_j$ must be real-valued because $\mathbb{E}[g(\cBN)]\in\R$ for real-valued $g$.
\qed

\subsection{Proof of Proposition \ref{thm:no:clt}}\label{app:no:clt}
We consider $\Chiz\in\ENmex$ for some $\eta>0$. The leading order of $\phi_N^{(\eta)}(s)$ is given by $\lr{\Chi_0,\e^{\i s\FockB_0}\Chi_0}$ and can be computed similarly to Propositions~\ref{prop:computation} and \ref{prop:polynomials}. Using \eqref{eqn:def:Chi_0^k} and abbreviating
\begin{equation}
\sigma_j:=\lr{\nu,\xi_j}\,,
\end{equation}
we find that
\begin{eqnarray}
\lr{\Chi_0,\e^{\i s\FockB_0}\Chi_0}
&=&\lr{\Omega,a(\xi_1)\mycdots a(\xi_\eta) W(\i s\nu)\ad(\xi_1)\mycdots \ad(\xi_\eta)\Omega}\nonumber\\
&=&\e^{-\frac12 s^2\sigma^2}\sum_{\l=0}^\eta s^{2(\eta-\l)}\frac{(-1)^{\eta-\l}}{\ell!((\eta-\l)!)^2}\sum_{\pi\in\mathfrak{S}_\eta}
\sigma_{\pi(\l+1)}\mycdots\sigma_{\pi(\eta)}\nonumber\\
&&\times\lr{\Omega,a(\xi_1)\mycdots a(\xi_\eta)
\ad(\xi_{\pi(1)})\mycdots\ad(\xi_{\pi(\l)})\ad(\nu)^{\eta-\l}\Omega}\nonumber\\
&=:&\e^{-\frac12 s^2\sigma^2}\sum_{\l=0}^\eta c_{\eta,\eta-\l} s^{2(\eta-\l)}\,,
\end{eqnarray}
where $\mathfrak{S}_\l$ denotes the set of permutations of $\l$ elements. To compute the coefficients $c_{\eta,\l}$, let us introduce the notation 
\begin{equation}
\zeta_j:=\begin{cases}
\xi_{\pi(j)} & j=1\mydots \l\,\\
\nu & j=\l+1\mydots \eta
\end{cases}
\end{equation}
and $I_\eta:=\{1\mydots \eta\}$.
Since
\begin{eqnarray}
&&\hspace{-1.5cm}\lr{\Omega,a(\xi_1)\mycdots a(\xi_\eta) \ad(\zeta_1)\mycdots\ad(\zeta_\eta)\Omega}\nonumber\\
&=&\sum_{j=1}^\eta \lr{\xi_\eta,\zeta_j}\lr{\Omega,a(\xi_1)\mycdots a(\xi_{\eta-1})\Big(\prod_{\mu\in I_\eta\setminus\{j\}}\ad(\zeta_j)\Big)\Omega}\nonumber\\
&=&\sum_{\pi'\in\mathfrak{S}_\eta}\lr{\xi_{\pi'(1)},\zeta_1}\mycdots\lr{\xi_{\pi'(\eta)},\zeta_\eta}\nonumber\\
&=&\sum_{\pi'\in\mathfrak{S}_\eta}\lr{\xi_{\pi'(1)},\xi_{\pi(1)}}\mycdots\lr{\xi_{\pi'(\l)},\xi_{\pi(\l)}}
\overline{\sigma_{\pi'(\l+1)}}
\mycdots \overline{\sigma_{\pi'(\eta)}} \,,
\end{eqnarray}
the coefficients $c_{\eta,\l}$ are given by
\begin{equation}\begin{split}
c_{\eta,\l}
=&\frac{(-1)^{\l}}{(\eta-\ell)!((\l)!)^2}\sum_{\pi,\pi'\in\mathfrak{S}_\eta}
\lr{\xi_{\pi'(1)},\xi_{\pi(1)}}\mycdots\lr{\xi_{\pi'(\eta-\l)},\xi_{\pi(\eta-\l)}}
\sigma_{\pi(\eta-\l+1)}\mycdots\sigma_{\pi(\eta)}
 \\
&\times\overline{\sigma_{\pi'(\eta-\l+1)}}
\mycdots\overline{\sigma_{\pi'(\eta)}}
\,.
\end{split}\end{equation}
This concludes the proof by \eqref{plancherel}.
\qed

\subsection*{Conflict of interest}
On behalf of all authors, the corresponding author states that there is no conflict of interest.

\section*{Acknowledgements}
It is a pleasure to thank Martin Kolb, Simone Rademacher, Robert Seiringer and Stefan Teufel for helpful discussions. Moreover, we thank the referee for many constructive  comments.
L.B.\ gratefully acknowledges funding from the German Research Foundation within the Munich Center of Quantum Science and Technology (EXC~2111) and from the European Union’s Horizon 2020 research and innovation programme under the Marie Sk{\textl}odowska-Curie Grant Agreement No.~754411. 
We thank the Mathematical Research Institute Oberwolfach, where part of this work was done, for their hospitality.


\bibliographystyle{abbrv}
    \bibliography{bib_file}

\end{document}